\newcommand{\eps}{\varepsilon}
\newcommand{\sig}{\sigma}
\newcommand{\real}{\operatorname{Re}}
\newcommand{\imag}{\operatorname{Im}}
\newcommand{\p}{\partial}
\newcommand{\td}{\,\text{d}}
\newcommand{\fu}{F_{v}}
\newcommand{\fs}{F_{\sigma}}
\newcommand{\tta}{\tilde{\theta}}
\newcommand{\tmu}{\tilde{\mu}}
\newcommand{\Ula}{\Upsilon_{\ell,\alpha}}
\newcommand{\gla}{\gamma_{\ell,\alpha}}
\newcommand{\bU}{\boldsymbol{U}}
\newcommand{\bbU}{\overline{\boldsymbol{U}}}
\newcommand{\bA}{\boldsymbol{A}}
\newcommand{\bS}{\boldsymbol{S}}
\newcommand{\bbS}{\overline{\boldsymbol{S}}}
\newcommand{\bF}{\boldsymbol{F}}
\newcommand{\bbF}{\overline{\boldsymbol{F}}}
\newcommand{\bI}{\boldsymbol{I}}
\newcommand{\trsp}{^{\mbox{\tiny$\mathsf{T}$}}}
\newcommand{\inv}{^{\mbox{\tiny$-1$}}}
\newcommand{\indL}{_{\mbox{\tiny$L$}}}
\newcommand{\expL}{^{\mbox{\tiny$L$}}}
\newcommand{\expLpt}{^{\mbox{\tiny$L$}+2}}
\newcommand{\indLpt}{_{\mbox{\tiny$L$}+2}}
\newcommand{\indLpo}{_{\mbox{\tiny$L$}+1}}
\newcommand{\indLa}{_{\mbox{\tiny$L$},\alpha}}
\newcommand{\spc}{\operatorname{sp}}
\newcommand{\sgn}{\operatorname{sgn}}
\newcommand{\tr}{\operatorname{tr}}
\newcommand{\tchi}{\tilde{\chi}}
\newcommand{\bHp}{\boldsymbol{\mathcal{H}}_p}
\newcommand{\bHd}{\boldsymbol{\mathcal{H}}_d}
\newtheorem{proposition}{Proposition}
\newtheorem{remark}{Remark}
\newtheorem*{remark*}{Remark}
\journal{Wave Motion}
\begin{document}

\begin{frontmatter}

%\title{Numerical modeling of wave propagation in a fractionally-dissipative 1D homogeneous medium}
%\title{Diffusive representation and approximation of fractionally damped viscoelastic dynamical systems}
%\title{Numerical modeling of wave propagation in a fractionally-dissipative medium by diffusive representation and approximation}
%\title{Diffusive approximation of fractional viscoelasticity:\\ Simulation of wave propagation in a prototype model}
%\title{Diffusive approximation of fractional attenuation: Numerical modeling of 1D wave propagation in a prototype model}
%\title{Diffusive approximation of fractional attenuation: Andrade model and numerical modeling of 1D wave propagation}
%\title{Numerical modeling of wave propagation in a fractionally dissipative Andrade 1-D medium using a diffusive approximation}
%\title{Numerical modeling of wave propagation in a viscoelastic Andrade medium with fractional derivative}
%\title{Diffusive approximation and numerical modeling of wave propagation in a fractional viscoelastic medium}
\title{Wave propagation in a fractional viscoelastic Andrade medium: diffusive approximation and numerical modeling}

\author[ECM]{A. Ben Jazia\corref{cor1}}
\ead{abderrahim.ben-jazia@centrale-marseille.fr}
\author[LMA]{B. Lombard\corref{cor2}}
\ead{lombard@lma.cnrs-mrs.fr}
\author[LMA]{C. Bellis\corref{cor3}}
\ead{bellis@lma.cnrs-mrs.fr}
\cortext[cor2]{Corresponding author}
\address[ECM]{Ecole Centrale Marseille, 13451 Marseille, France}
\address[LMA]{LMA, CNRS, UPR 7051, Aix-Marseille Univ., Centrale Marseille, F-13402 Marseille Cedex 20, France}

%-----------------------------------------------------------------------------------------------------------------------------------

\begin{abstract}
This study focuses on the numerical modeling of wave propagation in fractionally-dissipative media. These viscoelastic models are such that the attenuation is frequency dependent and follows a power law with non-integer exponent. As a prototypical example, the Andrade model is chosen for its simplicity and its satisfactory fits of experimental flow laws in rocks and metals. The corresponding constitutive equation features a fractional derivative in time, a non-local term that can be expressed as a convolution product which direct implementation bears substantial memory cost. To circumvent this limitation, a diffusive representation approach is deployed, replacing the convolution product by an integral of a function satisfying a local time-domain ordinary differential equation. An associated quadrature formula yields a local-in-time system of partial differential equations, which is then proven to be well-posed. The properties of the resulting model are also compared to those of the original Andrade model. The quadrature scheme associated with the diffusive approximation, and constructed either from a classical polynomial approach or from a constrained optimization method, is investigated to finally highlight the benefits of using the latter approach. Wave propagation simulations in homogeneous domains are performed within a split formulation framework that yields an optimal stability condition and which features a joint fourth-order time-marching scheme coupled with an exact integration step. A set of numerical experiments is presented to assess the efficiency of the diffusive approximation method for such wave propagation problems.
\end{abstract}

\begin{keyword}
Viscoelasticity; Andrade model; Fractional derivatives; Transient wave propagation; Finite differences
\end{keyword}

\end{frontmatter}

%\tableofcontents
%\clearpage

\linenumbers

\section{Introduction}\label{SecIntro}

There is a long history of studies discussing or providing experimental evidences of frequency-dependent viscoelastic attenuations, as observed in e.g. metals \cite{Andrade:1910iq}, acoustic media \cite{Szabo:94,Szabo:95} and in the Earth \cite{Flanagan,Lekic}. Such a behavior is classically modeled using a fractional derivative operator \cite{Torvik,Carcione}, a mathematical tool generalizing to real parameters the standard derivatives of integer orders \cite{Podlubny}. While fractional calculus is now a mature theory in the field of viscoelasticity \cite{Mainardi}, some issues remain commonly encountered. They mostly revolve around the two questions of: (1) incorporating fractional dissipation into viscoelastic models that both fit experimental data and have a theoretical validity regarding, e.g., causality properties \cite{Holm,Waters:2005} or the Kramers-Kronig relations \cite{Waters}; and (2) implementing numerically these fractional models to perform wave propagation simulations. The latter problem is commonly tackled using standard approaches \cite{Emmerich} for modeling constant-law of attenuation over a frequency-band of interest, i.e. with the fractional viscoelastic model being approximated by multiple relaxation mechanisms \cite{Nasholm}.

Bearing in mind the issue (1) discussed above, it is chosen to anchor the present study to a specific, yet prototypical, physically-based viscoelastic model, namely the Andrade model. Initially introduced in \cite{Andrade:1910iq} to fit experimental flow laws in metals, it has been further investigated in \cite{Andrade:1962uh}. It is now used as a reference in a number of studies \cite{Gribb:1998tf,Jackson:2010iv,Sundberg:2010p4848,Bellis13a} for the description of observed frequency-dependent damping behaviors in the field of geophysics and experimental rock mechanics. Moreover, the Andrade model creep function, as written, can notably be decomposed into a fractional power-law added to a standard Maxwell viscoelastic model. Therefore, while being physically motivated and rooted in experiments, this model gives leeway to cover the spectrum from a conventional rheological mechanism to a more complex fractional model, and this with only a few parameters.\\

This study focuses on the issue (2), namely the numerical modeling of wave propagation within a fractionally-dissipative Andrade medium. The objective is to develop an efficient approximation strategy of the fractional term featured in this viscoelastic model in view of the investigation and simulation of its transient dynamical behavior. A model-based approximation approach is explored in the sense that the original constitutive equation is not intended to be superseded by another viscoelastic model, \emph{per se}, which would be designed to fit only a given observable, such as the quality factor, or overall behavior.\\

 The article aim and contribution are twofold:
\begin{enumerate}[leftmargin=0cm,itemindent=14mm,labelwidth=8mm,labelsep=0mm,align=left,label={(\roman*)},noitemsep,topsep=0pt] 
\item Deploy an approximation of the fractional derivative featured in the constitutive equation considered. A direct discretization of this term, that is associated with a non-local time-domain convolution product \cite{Podlubny} requires the storage of the entire variables history. While being potentially improved by Gr\"unwald-Letnikov finite-difference approximation schemes, see e.g. \cite{Galucio}, such an approach remains costly numerically and is therefore circumvented. Alternatively, a so-called diffusive representation is preferred \cite{Matignon:proc}, as it allows to recast the equations considered into a local-in-time system while introducing only a limited number of additional memory variables in its discretized form \cite{Yuan02}. Following later improvements of the method in \cite{Diethelm08,Birk10,Haddar,Deu}, an efficient quadrature scheme is investigated in order to obtain a satisfactory fit of the reference model compliance.
\item Implement the resulting approximated model into a wave propagation scheme. While the available literature on the numerical simulation of transient wave propagation within fractionally-damped media is relatively scarce, see e.g. \cite{Wismer,Caputo11}, the aim is here to demonstrate the practicality and efficiency of the proposed approach. For the sake of simplicity, the viscoelastic medium considered is assumed to be unidimensional and homogeneous. After discretization of the dynamical system at hand, a Strang splitting approach \cite{LeVeque92} is adopted, both to reach an optimal stability condition and to enable the use of an efficient high-order time-marching scheme coupled with an exact integration step. Moreover, deriving a semi-analytical solution for the configuration considered, as a baseline, a set of numerical results is presented to assess the quality of the numerical scheme developed. The overall features and performances of the diffusive representation are finally discussed to compare the original Andrade model with its diffusive approximated counterpart.\\
\end{enumerate}

This article is organized as follows. The reference Andrade model is presented and discussed in Section \ref{SecModel}. Considering the featured fractional derivative, a corresponding diffusive approximated version of the former is subsequently formulated and referred to as the Andrade--DA model. The evolution problem is investigated in Section \ref{SecPDE}, with the derivation and analysis of the first-order hyperbolic system associated with the Andrade--DA model. Section \ref{SecNumQuad} is concerned with the definition and computation of an efficient quadrature scheme for the diffusive approximation, while the implementation of the fully discretized system is described in Section \ref{SecNumScheme}. Corresponding numerical results are presented and discussed in Section \ref{SecExp}.

%-----------------------------------------------------------------------------------------------------------------------------------
%-----------------------------------------------------------------------------------------------------------------------------------

\section{Fractional viscoelastic model}\label{SecModel}

\subsection{Preliminaries}\label{SecModelPreliminaries}

The causal constitutive law describing the behavior of a 1D linear viscoelastic medium can be expressed in terms of the time-domain convolution
\begin{equation}
\eps(t)=\int_0^t \chi(t-\tau)\frac{\p \sig}{\p \tau}(\tau)\td\tau,
\label{Creep}
\end{equation}
with creep function $\chi$, stress field $\sig$ and strain field $\eps={\p u}/{\p x}$ associated with unidimensional displacement $u$.

Next, for parameters satisfying $0<\beta<1$, the so-called Caputo-type fractional derivative \cite{Carcione,Mainardi,Podlubny} of a causal function $g(t)$ is defined as
\begin{equation}\label{def:frac:diff}
%\frac{\td^{\beta}g}{\td t^\beta} (t)=\int_{0}^{t}{\frac{ (t-\tau)^{-\beta}}{\Gamma(1-\beta)}\frac{\td g}{\td \tau}(\tau)\td\tau},
\frac{\td^{\beta}g}{\td t^\beta}(t)=\frac{1}{\Gamma(1-\beta)}\int_{0}^{t}(t-\tau)^{-\beta}\frac{\td g}{\td \tau}(\tau)\td\tau,
\end{equation}
where $\Gamma$ is the Gamma function. Defining the direct and inverse Fourier transforms in time of a function $g(t)$ as
\[
\hat{g}(\omega)=\int_{-\infty}^{+\infty}{g(t)e^{-i\omega t}\td t}, \qquad
g(t)=\frac{1}{2\pi}\int_{-\infty}^{+\infty}\hat{g}(\omega)e^{i\omega t}\td\omega,
\]
then the frequency-domain counterpart of equation \eqref{def:frac:diff} reads
\begin{equation}
\widehat{\left[\!\frac{\td^{\beta}g}{\td t^\beta}\right]}\!(\omega)=(i\omega)^\beta\,\hat g(\omega),
\label{FourierFrac}
\end{equation}
so that definition \eqref{def:frac:diff} is a straightforward generalization of the derivative of integer order.

%-----------------------------------------------------------------------------------------------------------------------------------

\subsection{Andrade model}\label{SecModelAndrade}

The Andrade model \cite{Andrade:1910iq} is characterized by the creep function given by
\begin{equation}
\label{CreepAndrade}
\chi(t)=\bigg[J_u+\frac{t}{\eta}+A\,t^\alpha\bigg] H(t), \hspace{1cm} 0<\alpha<1,
\end{equation}
with Heaviside step function $H(t)$, unrelaxed compliance $J_u$, viscosity $\eta$ and two positive physical parameters $A$ and $\alpha$. Usual fits with experimental data correspond to $\frac13\leq\alpha\leq\frac12$ \cite{Andrade:1962uh,Gribb:1998tf}. The composite law \eqref{CreepAndrade} can be additively decomposed into a standard Maxwell rheological mechanism with creep function $t\mapsto J_u+t/\eta$ and a relaxation time $\tau_{\mbox{\tiny Mx}}=\eta\,J_u$, together with a power law dependence in time $t\mapsto A\,t^{\alpha}$ which constitutes its main feature. Examples behaviors of the creep function \eqref{CreepAndrade} are illustrated in Figure \ref{fig:creep}.

The frequency-domain compliance ${ N}=i\omega\hat{\chi}$, and such that $\hat\eps={N}\,\hat\sig$ according to the Fourier transform of equation \eqref{Creep}, can be deduced from \eqref{CreepAndrade} as
\begin{equation}
{ N}(\omega)=J_u-\frac{ i}{ \eta\,\omega}+A\,\Gamma(1+\alpha)\,(i\,\omega)^{-\alpha}.
\label{Compliance}
\end{equation}
Straightforward manipulations on \eqref{FourierFrac}, \eqref{CreepAndrade} and \eqref{Compliance} lead to the following constitutive equation in differential form for the Andrade model
\begin{equation}
\frac{\p\eps}{\p t}=J_u\frac{\p\sig}{\p t}+\frac{1}{\eta}\sig+A\,\Gamma(1+\alpha)\frac{\p^{1-\alpha}}{\p t^{1-\alpha}}\sig.
\label{stress:strain:And2}
\end{equation}

%-----------------------------------------------------------------------------------------------------------------------------------

\subsection{Dispersion relations}\label{SecModelWave}

%\begin{figure}[htbp]	
%\centering
%\subfloat[Creep function $\chi(t)$]{\includegraphics[width=0.5\textwidth]{FluageTheo.eps}\label{fig:creep}}
%\subfloat[Quality factor $Q(f)$]{\includegraphics[width=0.5\textwidth]{QTheo2.eps}\label{fig:Q}}\\    
%\subfloat[Phase velocity $c(f)$]{\includegraphics[width=0.5\textwidth]{CelTheo2.eps}\label{fig:Cel}}
%\subfloat[Attenuation $\zeta(f)$]{\includegraphics[width=0.5\textwidth]{AttTheoLin.eps}\label{fig:Att}}
%\caption{Behaviors of two viscoelastic models derived from \eqref{CreepAndrade}: Maxwell model ($A=0$) and Andrade model ($\alpha=1/3$, with $A=10^{-10}$ {Pa}$^{-1}.${s}$^{-\alpha}$ and $A=2.10^{-10}$ {Pa}$^{-1}.${s}$^{-\alpha}$). The other physical parameters are: $\rho=1200$ kg/m$^3$, $c_{\infty}=2800$ m/s and $\eta=10^9$ Pa.s. The horizontal dotted line in panel (c) denotes the high-frequency limit $c_{\infty}$.}
%\label{FigDispersion}
%\end{figure}

\begin{figure}[htbp]	
\centering
\subfloat[Creep function $\chi(t)$]{\includegraphics[width=0.5\textwidth]{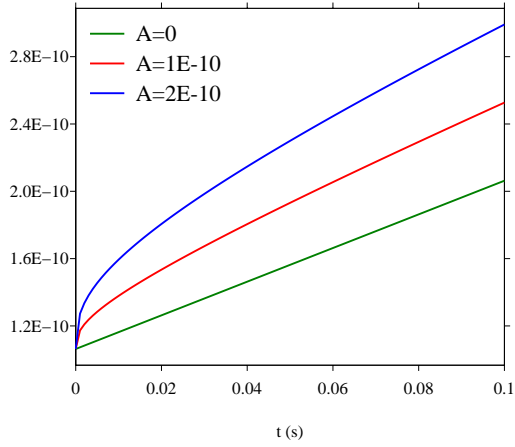}\label{fig:creep}}
\subfloat[Quality factor $Q(f)$]{\includegraphics[width=0.5\textwidth]{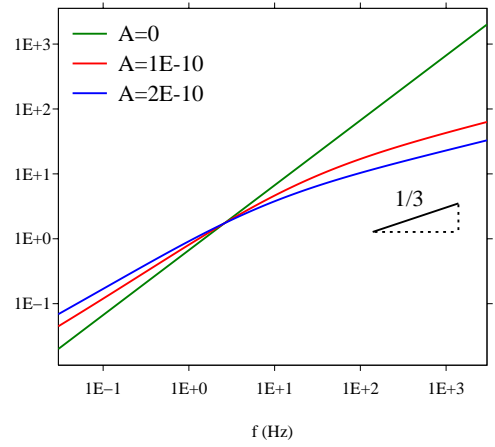}\label{fig:Q}}\\    
\subfloat[Phase velocity $c(f)$]{\includegraphics[width=0.5\textwidth]{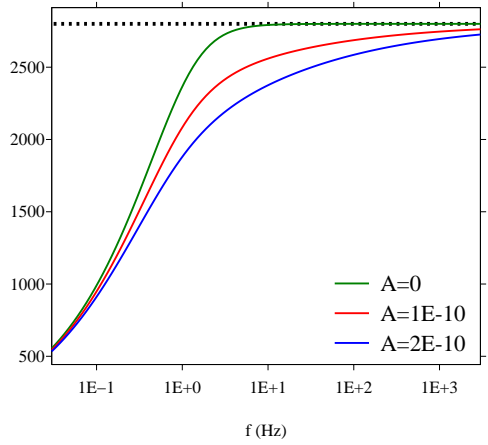}\label{fig:Cel}}
\subfloat[Attenuation $\zeta(f)$]{\includegraphics[width=0.5\textwidth]{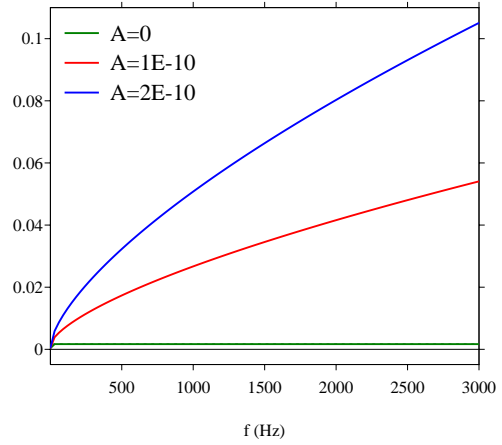}\label{fig:Att}}
\caption{Behaviors of various viscoelastic models derived from \eqref{CreepAndrade}: Maxwell model ($A=0$) and Andrade model ($\alpha=1/3$, with $A=10^{-10}$ {Pa}$^{-1}.${s}$^{-\alpha}$ and $A=2.10^{-10}$ {Pa}$^{-1}.${s}$^{-\alpha}$). The other physical parameters are: $\rho=1200$ kg/m$^3$, $c_{\infty}=2800$ m/s and $\eta=10^9$ Pa.s. The horizontal dotted line in panel (c) denotes the high-frequency limit $c_{\infty}$.}
\label{FigDispersion}
\end{figure}

The complex wave number $k(\omega)$ satisfies
\begin{equation}
k(\omega)=\sqrt{\rho}\,\omega\, {[N(\omega)]}^{1/2}:=\frac{\omega}{c(\omega)}-i\zeta(\omega),
\label{KAndrade}
\end{equation}
where the phase velocity $c$ and the attenuation $\zeta$ are given by
\begin{equation}
c(\omega)=\sqrt{\frac{2}{\rho(|{N}|+\real[{N}])}},\qquad\zeta(\omega)=\omega\sqrt{\frac{\rho(|{N}|-\real[{N}])}{2}}.
\label{CelAndrade}
\end{equation}
Owing to equations \eqref{Compliance} and \eqref{CelAndrade}, the following limits hold:
\begin{equation}
\begin{aligned}\label{LimitsBfHf}
&\lim_{\omega \to 0} {c(\omega)} =0 , & \quad  &  \lim_{\omega \to +\infty} {c(\omega)} =\frac{1}{\sqrt{\rho J_u}}:= c_{\infty},\\
&\lim_{\omega \to 0} {\zeta(\omega)} =0 ,  & \quad  &  \lim_{\omega \to +\infty} {\zeta(\omega)} =+\infty.
\end{aligned}
\end{equation}

Moreover, when $A> 0$, the creep function \eqref{CreepAndrade} is an increasing and concave function. As a consequence, owing to the theoretical developments in \cite{Hanyga13a} and \cite{Hanyga13b}, the attenuation $\zeta(\omega)$ for the Andrade model turns out to be sublinear in the high-frequency range, i.e.
\begin{equation}
\zeta(\omega)\underset{\omega\to+\infty}{=}o(\omega).
\label{AttConcave}
\end{equation}
This key property confirms the relevance of the choice of the Andrade model as a prototypical example of fractional viscoelastic media.

The quality factor $Q$ is defined as the ratio
\begin{equation}
Q(\omega)=-\frac{\real[k^2]}{\imag[k^2]}=-\frac{\real[{N}]}{\imag[{N}]}.
\label{Q:def}
\end{equation}
According to \eqref{Compliance} and in a high-frequency regime, the frequency-dependent behavior follows
\begin{equation}
Q(\omega)\underset{\omega\to+\infty}{\sim} Q_\infty\,\omega^\alpha \qquad \text{ with } Q_\infty={ J_u}\bigg[ A\,\Gamma(1+\alpha)\,\sin\!\left( \frac{ \alpha\pi}{ 2}\right)\bigg]^{-1}.
\label{Qinfty}
\end{equation}
Sample behaviors of the Andrade model for $\alpha=1/3$ and a varying parameter $A$ are sketched in Figure \ref{FigDispersion}. Notably, the case $A=0$ corresponds to the standard Maxwell model. When $A\neq0$, one observes in Fig. \ref{fig:Q} the slope $1/3$ of the quality factor in log-log scale at high frequencies, as expected from \eqref{Qinfty}. The attenuation $\zeta$ is represented as a function of the frequency $f$ and displayed in linear scale in Fig. \ref{fig:Att} to emphasize the sublinear high-frequency behavior \eqref{AttConcave}.

%-----------------------------------------------------------------------------------------------------------------------------------

\subsection{Diffusive approximation: Andrade--DA model}\label{SecModelDA}

When implementing \eqref{stress:strain:And2}, the difficulty revolves around the computation of the convolution product in \eqref{def:frac:diff} associated with the fractional derivative of order $1-\alpha$, which is numerically memory-consuming. The alternative approach adopted in this study is based on a diffusive representation, and its approximation, of fractional derivatives. Following \cite{Diethelm08}, then for $0<\alpha<1$ equation \eqref{def:frac:diff} can be recast as
\begin{equation}\label{Diff:Rep}
\frac{\p^{1-\alpha}}{\p t^{1-\alpha}}\sigma=\int_{0}^{+\infty}\!\!{\phi(x,t,\theta) \td\theta},
\end{equation}
where the function $\phi$ is defined owing to a change of variables as
\begin{equation}
\phi(x,t,\theta)=\frac{2\sin(\pi\alpha)}{\pi} \theta^{1-2\alpha}\int_{0}^t\frac{\p \sigma}{\p \tau}(x,\tau)\,e^{-(t-\tau)\,\theta^2}\td \tau.
\label{def:mem:var}
\end{equation}
As $\phi$ is expressed in terms of an integral operator with decaying exponential kernel, it is referred to as a \emph{diffusive} variable. From equation \eqref{def:mem:var}, it can be shown to satisfy the following first-order differential equation for $\theta>0$:
\begin{equation}\label{ode:DR}
\left\{
\begin{aligned}
& \frac{\p \phi}{\p t}=-\theta^2\,\phi+\frac{2\sin(\pi\alpha)}{\pi} \theta^{1-2\alpha}\frac{\p \sigma}{\p t},\\
& \phi(x,0,\theta)=0.\\
\end{aligned}
\right.
\end{equation}
The diffusive representation (\ref{Diff:Rep}--\ref{def:mem:var}) amounts to supersede the non-local term in \eqref{stress:strain:And2} by an integral of the function $\phi(x,t,\cdot)$ obeying the local first-order ordinary differential equation \eqref{ode:DR}. The integral featured in \eqref{Diff:Rep} is in turn well-suited to be approximated using a quadrature scheme, so that
\begin{equation}\label{Diff:Approx}
\frac{\p^{1-\alpha}}{\p t^{1-\alpha}}\sigma\simeq\sum_{\ell=1}^{L} {\mu_\ell\, \phi(x,t,\theta_\ell)}\equiv\sum_{\ell=1}^{L} {\mu_\ell\, \phi_\ell(x,t)},
\end{equation}
given a number $L$ of quadrature nodes $\theta_\ell$ with associated weights $\mu_\ell$, whose computations will be returned to in Section \ref{SecNumQuad}.\\

The frequency-domain versions of equations \eqref{stress:strain:And2}, \eqref{ode:DR} and \eqref{Diff:Approx} lead to the approximated complex compliance ${ {\tilde N}}$, such that $\hat\eps={ {\tilde N}}\,\hat\sig$ and characterizing the model hereafter referred to as the Andrade--DA model, as
\begin{equation}
{ {\tilde N}}(\omega)=J_u-\frac{ i}{ \eta\,\omega}+A \Gamma(1+\alpha)\,\frac{2 \sin(\pi\alpha)}{\pi} \sum_{\ell=1}^{L}\mu_\ell\,\frac{\theta_{\ell}^{1-2\alpha}}{\theta_{\ell}^2+i\omega}.
\label{ComplianceDA}
\end{equation}
The diffusive approximated counterparts of \eqref{KAndrade} and \eqref{CelAndrade} can be immediately deduced from \eqref{ComplianceDA}. In particular, the low-frequency and high-frequency limits of the phase velocity ${\tilde c}$ are equal to those in \eqref{LimitsBfHf}. Moreover, using tables of standard Fourier transforms, the corresponding time-domain creep function $\tchi$, defined by $\tilde{N}=i\omega\hat{\tchi}$, is obtained as
\begin{equation}
\tchi(t)=\bigg[J_u +\frac{t}{\eta}+ A \Gamma(1+\alpha) \frac{2 \sin(\pi\alpha)}{\pi} \sum_{\ell=1}^{L} {\mu_\ell\, \theta_\ell^{-1-2\alpha} \left( 1-e^{-\theta_\ell^2t} \right)}\bigg] H(t).
\label{CreepDA}
\end{equation}

%-----------------------------------------------------------------------------------------------------------------------------------
%-----------------------------------------------------------------------------------------------------------------------------------

\section{Evolution equations}\label{SecPDE}

With the complex compliance \eqref{ComplianceDA} of the Andrade--DA model at hand, which constitutes the approximated version of the diffusive representation of the original Andrade model \eqref{Compliance}, the present section is concerned with the description and analysis of its dynamical behavior.

\subsection{First-order system}\label{SecPDEsys}

Let define the parameters
\begin{equation}\label{def:gUla}
\gla=\frac{2\sin(\pi\alpha)}{\pi\,J_u}\theta_\ell^{1-2\alpha}, \qquad \Ula=A\Gamma(1+\alpha)\,\gla  \qquad  \text{for }\ell=1,\dots,L.
\end{equation}
Combining the conservation of momentum in terms of velocity field $v={\p u}/{\p t}$ and equations \eqref{stress:strain:And2}, \eqref{ode:DR} and \eqref{Diff:Approx} yields 
\begin{subnumcases}{\label{EDP}}
\frac{\p v}{\p t} -\frac{1}{\rho} \frac{\p \sigma}{\p x}=\fu,\label{EDP1}\\
\frac{\p \sigma}{\p t} -\frac{1}{J_u} \frac{\p v}{\p x}=-\frac{1}{J_u \eta}\sig-\frac{A \Gamma(1+\alpha)}{J_u}\sum_{j=1}^{L} {\mu_j \phi_j}+\fs,\label{EDP2}\\
\frac{\p \phi_\ell}{\p t}- \gla \frac{\p v}{\p x}  = -\theta_\ell^2 \phi_\ell -\frac{\gla}{\eta}\sigma - \Ula \sum_{j=1}^{L} {\mu_j \phi_j}+J_u\gla\,\fs,\label{EDP3}
\end{subnumcases}
for $\ell=1,\dots,L$ and where $\fu$ and $\fs$ are introduced to model external sources. Equations \eqref{EDP} are completed by initial conditions
\[
v(x,0)=0,\qquad \sig(x,0)=0, \qquad \phi_\ell(x,0)=0 \qquad \text{ for }\ell=1,\cdots,L.
\]
Gathering unknown and sources terms, let the vectors $\bU$ and $\bF$ be defined as
\begin{equation}
\bU=\big[v,\,\sigma,\,\phi_1,\,\cdots,\,\phi\indL\big]\trsp, \hspace{1cm} \bF=\big[\fu,\,\fs,\,J_u\gamma_{1,\alpha}\,\fs,\,\cdots,\,J_u\gamma\indLa\,\fs\big]\trsp.
\label{VecUF}
\end{equation}
Then the system \eqref{EDP} can be written in the matrix-form
\begin{equation}\label{syst:AU}
\frac{\p \bU}{\p t} +\bA \frac{\p \bU}{\p x}= \bS\bU+\bF,
\end{equation}
where $\bA$ is given by
\begin{equation}\label{def:bA}
\bA=\left[\begin{array}{ccccc}
0 & -\rho\inv & 0 & \cdots&0 \\
-{J_u}\inv & 0 &0&\cdots&0 \\
-\gamma_{1,\alpha}  & 0 & 0&\cdots&0\\
\vdots&\vdots&\vdots&\ddots&\vdots \\
-\gamma\indLa & 0 & 0&\cdots&0\\
\end{array}\right],
\end{equation}
and $\bS$ reads
\begin{equation}\label{def:bS}
\bS=\left[\begin{array}{ccccc}
0 &0 &0&\cdots&0\\[1mm]
0 & - (J_u \eta)\inv & -{A\Gamma(1+\alpha)}{J_u}\inv\mu_1& \cdots & -{A\Gamma(1+\alpha)}{J_u}\inv\mu\indL\\[1mm]
0 & -\gamma_{1,\alpha}\eta\inv & -\theta_1^2-\Upsilon_{1,\alpha}\,\mu_1 & \cdots & -\Upsilon_{1,\alpha}\, \mu\indL \\[1mm]
\vdots & \vdots & \vdots & \ddots & \vdots  \\[1mm]
0 & -\gamma\indLa\eta\inv& -\Upsilon\indLa\,\mu_1 & \cdots & -\theta\indL^2-\Upsilon\indLa\,\mu\indL
\end{array}\right].
\end{equation}
Note that this differential system remains valid in the case of a \emph{non-homogeneous} viscoelastic medium.

%-----------------------------------------------------------------------------------------------------------------------------------

\subsection{Energy decay}\label{SecPDEnrj}

Studying the energy associated with the system \eqref{EDP} is required to characterize the stability of the Andrade--DA model and to provide constraints on the diffusive approximation calculation. For an infinite 1D domain, the stored kinetic and elastic energies are defined as
\[ 
{\cal E}_v(t)= \frac{1}{2} \int_{-\infty}^{+\infty}{\rho v^2 \td x} \quad \text{and} \quad {\cal E}_\sig(t)= \frac{1}{2} \int_{-\infty}^{+\infty}{J_u \sigma^2 \td x},
\]
together with a coupled term associated with the diffusive approximation
\[
{\cal E}_d(t)=\frac{1}{2} \int_{-\infty}^{+\infty}{\sum_{\ell=1}^{L}{\frac{\mu_\ell\Ula}{\theta_\ell^2} \left(\sqrt{J_u}\,\sigma-\frac{\phi_\ell}{\sqrt{J_u}\,\gla}\right)}^{\!2} \td x}.
\]
Then, in the absence of any source term, one has the following property
\begin{proposition}\label{PropNRJ}
If $\mu_\ell>0$ for all $\ell=1,\dots,L$, then the function ${\cal E}(t)={\cal E}_v(t)+{\cal E}_\sig(t)+{\cal}_d(t)$ is a positive definite quadratic form and $\normalfont\displaystyle\frac{\td {\cal E}}{\td t}<0$ for all time $t>0$.
\end{proposition}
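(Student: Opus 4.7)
The plan is to establish the two assertions of the proposition separately: first the positive-definiteness of the quadratic form $\mathcal{E}$, then the strict decay along source-free solutions of \eqref{EDP}.

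For the positive-definiteness, notice that under the standing hypotheses $A>0$ and $0<\alpha<1$, both $\gamma_{\ell,\alpha}$ and $\Upsilon_{\ell,\alpha}$ defined in \eqref{def:gUla} are strictly positive. Combined with the assumption $\mu_\ell>0$, this makes all weights in $\mathcal{E}_v$, $\mathcal{E}_\sigma$ and $\mathcal{E}_d$ positive, so $\mathcal{E}$ is a sum of squares. If $\mathcal{E}(t)=0$, then $\mathcal{E}_v=\mathcal{E}_\sigma=0$ forces $v\equiv 0$ and $\sigma\equiv 0$, after which the residual $\mathcal{E}_d=0$ reduces to $\sum_\ell (\mu_\ell \Upsilon_{\ell,\alpha}/\theta_\ell^2)\,\phi_\ell^2/(J_u\gamma_{\ell,\alpha}^2)=0$, and hence $\phi_\ell\equiv 0$ for every $\ell$.

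For the decay, I would compute $d\mathcal{E}/dt$ along solutions of \eqref{EDP} with $\fu=\fs=0$, integrating by parts on $\mathbb{R}$ under the usual decay assumption at infinity. A short calculation gives
\[
\frac{d\mathcal{E}_v}{dt}+\frac{d\mathcal{E}_\sigma}{dt}=-\frac{1}{\eta}\int_{\mathbb{R}}\sigma^2\,dx - A\Gamma(1+\alpha)\int_{\mathbb{R}} \sigma\sum_{j=1}^{L}\mu_j\phi_j\,dx,
\]
the hyperbolic flux terms $\pm\int_{\mathbb{R}}\sigma\,\partial_x v\,dx$ coming from \eqref{EDP1} and \eqref{EDP2} cancelling one another. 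The sign of $d\mathcal{E}/dt$ is thus governed by the non-local coupling term that remains, which must be absorbed by $d\mathcal{E}_d/dt$.

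The core step is to evaluate $d\mathcal{E}_d/dt$ through the auxiliary combination $X_\ell:=\sqrt{J_u}\,\sigma - \phi_\ell/(\sqrt{J_u}\,\gamma_{\ell,\alpha})$, so that $\mathcal{E}_d=\tfrac12\int_{\mathbb{R}}\sum_\ell (\mu_\ell\Upsilon_{\ell,\alpha}/\theta_\ell^2)\,X_\ell^2\,dx$. Substituting \eqref{EDP2} and \eqref{EDP3} into $\partial_t X_\ell$, the key observation is that the transport contributions $(1/\sqrt{J_u})\partial_x v$, the Maxwell relaxation contributions $\sigma/(\sqrt{J_u}\,\eta)$, and—crucially—the non-local coupling contributions $\sum_j \mu_j\phi_j$ all cancel. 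The last of these cancellations uses the identity $\Upsilon_{\ell,\alpha}/\gamma_{\ell,\alpha}=A\Gamma(1+\alpha)$, which is exactly the calibration imposed in \eqref{def:gUla}. What survives is the compact expression $\partial_t X_\ell = \theta_\ell^2\,\phi_\ell/(\sqrt{J_u}\,\gamma_{\ell,\alpha})$. Plugging this back into $d\mathcal{E}_d/dt$ and expanding $X_\ell$ in terms of $\sigma$ and $\phi_\ell$ produces a term $+A\Gamma(1+\alpha)\int_{\mathbb{R}}\sigma\sum_\ell\mu_\ell\phi_\ell\,dx$ that exactly cancels the leftover coupling from the previous step, plus a manifestly dissipative remainder. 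Summing the three contributions then yields
\[
\frac{d\mathcal{E}}{dt} = -\frac{1}{\eta}\int_{\mathbb{R}}\sigma^2\,dx - \frac{A\Gamma(1+\alpha)}{J_u}\int_{\mathbb{R}}\sum_{\ell=1}^{L}\frac{\mu_\ell}{\gamma_{\ell,\alpha}}\,\phi_\ell^2\,dx,
\]
which is strictly negative whenever $(\sigma,\phi_1,\dots,\phi_L)\not\equiv 0$. The main obstacle is therefore not analytic but algebraic: recognising that the specific weighting $\mu_\ell\Upsilon_{\ell,\alpha}/\theta_\ell^2$ of $X_\ell^2$ in $\mathcal{E}_d$ is precisely the Lyapunov-type correction that simultaneously kills the transport and the non-local coupling terms. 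Once this structural observation is identified, the rest of the proof is a direct bookkeeping of already local quadratic quantities.
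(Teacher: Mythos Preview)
Your argument is correct and lands on the same energy identity as the paper. The only methodological nuance is in how you treat the diffusive contribution: you compute $\partial_t X_\ell$ from the full equations \eqref{EDP2}--\eqref{EDP3} and verify that the transport, Maxwell, and coupling terms all cancel, whereas the paper bypasses these cancellations by working directly with the simpler relation \eqref{ode:DR}, i.e.\ $\partial_t\phi_\ell=-\theta_\ell^2\phi_\ell+J_u\gamma_{\ell,\alpha}\,\partial_t\sigma$, to derive the pointwise identity
\[
\phi_\ell\,\sigma=\frac{\phi_\ell^2}{J_u\gamma_{\ell,\alpha}}+\frac{\gamma_{\ell,\alpha}}{2\theta_\ell^2}\frac{\td}{\td t}\Big(\sqrt{J_u}\,\sigma-\frac{\phi_\ell}{\sqrt{J_u}\,\gamma_{\ell,\alpha}}\Big)^{2},
\]
and then substitutes it into the balance for $\mathcal{E}_v+\mathcal{E}_\sigma$. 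Since \eqref{EDP3} is precisely \eqref{ode:DR} with $\partial_t\sigma$ eliminated via \eqref{EDP2}, the two routes are algebraically equivalent; the paper's choice just makes the cancellations you observe automatic rather than something to be checked.
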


\begin{proof}
In the absence of any source term, then multiplying the momentum equation \eqref{EDP1} by the velocity field $v$ and integrating spatially by part yields
\[
\int_{-\infty}^{+\infty}\bigg\{\rho v \frac{\p v}{\p t} + \sigma \frac{\p v}{\p x}\bigg\}\td x=0,
\]
assuming that the elastic fields vanish at infinity. Likewise, from equation \eqref{EDP2} and multiplying by $\sig$, one obtains
\begin{equation}\label{ab}
\frac12 \frac{\td}{\td t} \int_{-\infty}^{+\infty}\left\{\rho v^2 + J_u \sigma^2\right\}\td x + \int_{-\infty}^{+\infty}\bigg\{{\frac{\sigma^2}{\eta}} + A\Gamma(1+\alpha) \sum_{\ell=1}^{L}{\mu_\ell\, \phi_\ell \,\sigma}\bigg\}\td x=0. 
 \end{equation}
Now, using twice differential equation \eqref{ode:DR}, one has for $\ell=1,\dots,L$
\[
\sigma \frac{\p{\phi_\ell}}{\p t}+\theta_\ell^2 \phi_\ell \sigma -J_u\gla \sigma \frac{\p \sigma}{\p t}=0\quad \text{and}\quad \frac{\phi_\ell}{J_u\gla} \frac{\p{\phi_\ell}}{{\p t}}+\frac{\theta_\ell^2 \phi_\ell^2 }{J_u\gla}- \phi_\ell \frac{\p \sigma}{\p t}=0,
\]
which after subtraction and manipulation entails
\begin{equation}\label{final}
\phi_\ell\,\sigma=\frac{\phi_\ell^2}{J_u\gla} +\frac{\gla}{2\theta_\ell^2}  \frac{\td}{\td t}  \bigg( \sqrt{J_u}\,\sigma -\frac{\phi_\ell}{\sqrt{J_u}\,\gla} \bigg)^{\!2}.
\end{equation}
Finally, substituting \eqref{final} in \eqref{ab} leads to the relation
\begin{multline*}
\frac12 \frac{\td}{\td t}  \int_{-\infty}^{+\infty}\bigg\{\rho v^2 + J_u \sigma^2 + \sum_{\ell=1}^L\frac{\mu_\ell\Ula}{\theta_\ell^2}\bigg(\sqrt{J_u}\,\sigma -\frac{\phi_\ell}{\sqrt{J_u}\,\gla}\bigg)^{\!2}\,\bigg\}\td x\\
=-\int_{-\infty}^{+\infty}\bigg\{\frac{\sigma^2}{\eta}+\sum_{\ell=1}^L\mu_\ell\Ula\left(\frac{\phi_\ell}{\sqrt{J_u}\,\gla}\right)^{\!2}\bigg\}\td x,
\end{multline*}
which concludes the proof.
\end{proof}

In summary, positivity of the quadrature nodes and weights in \eqref{Diff:Approx} is crucial to ensure the well-posedness of the system \eqref{EDP}. This issue will be further discussed in Section \ref{SecNumQuad}.
 
%-----------------------------------------------------------------------------------------------------------------------------------

\subsection{Properties of matrices}\label{PDEmat}

Some properties of the matrices $\bA$ \eqref{def:bA} and $\bS$ \eqref{def:bS} are discussed to characterize the first-order system \eqref{syst:AU} of partial differential equations.

\begin{proposition}\label{prop:vp:A}The eigenvalues of the matrix $\bA$ are
\[
\spc(\bA)=\left\{0,\pm c_{\infty}\right\},\qquad \text{with $0$ being of multiplicity $L$}.
\]
\end{proposition}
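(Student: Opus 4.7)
The plan is a direct characteristic polynomial computation exploiting the sparsity of $\bA$. Observing \eqref{def:bA}, only the first two columns of $\bA$ carry nonzero entries, so $\text{rank}(\bA)\le 2$ and consequently $\bA$ already has $0$ as an eigenvalue of geometric multiplicity at least $L$; I will nevertheless derive the full characteristic polynomial to pin down the algebraic multiplicity and the two remaining eigenvalues simultaneously.

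Form the $(L+2)\times(L+2)$ matrix $\bA-\lambda\bI$. By inspection, for every index $\ell\in\{1,\dots,L\}$ the column associated with $\phi_\ell$ has a single nonzero entry, equal to $-\lambda$, on the diagonal. I will expand $\det(\bA-\lambda\bI)$ successively along these $L$ columns (equivalently, along the corresponding rows), each expansion factoring out one copy of $-\lambda$. After $L$ such expansions the leftover minor is the upper-left $2\times 2$ block
\[
\begin{pmatrix} -\lambda & -\rho\inv \\ -J_u\inv & -\lambda \end{pmatrix},
\]
whose determinant is $\lambda^{2}-(\rho J_u)\inv=\lambda^{2}-c_{\infty}^{2}$, using the definition of $c_{\infty}$ in \eqref{LimitsBfHf}. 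Collecting,
\[
\det(\bA-\lambda\bI)=(-\lambda)^{L}\bigl(\lambda^{2}-c_{\infty}^{2}\bigr),
\]
whose roots are exactly $0$ with algebraic multiplicity $L$ and $\pm c_{\infty}$, each simple.

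There is no genuine obstacle here: the only point needing a touch of care is bookkeeping the signs from the cofactor expansions, which cleanly combine into the prefactor $(-\lambda)^{L}$ because at each step the pivot $-\lambda$ sits on the diagonal of the current minor. The coupling parameters $\gamma_{\ell,\alpha}$ never appear in the characteristic polynomial, which is natural given that they occupy entries confined to the zero-column block of the reduced minors and therefore do not influence $\spc(\bA)$.
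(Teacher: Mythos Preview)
Your argument is correct. The paper itself does not supply a proof of this proposition---it is stated without justification and immediately followed by the remark that $\bA$ is diagonalizable with real eigenvalues---so there is nothing to compare against; your cofactor expansion along the last $L$ columns is the natural route and the computation $\det(\bA-\lambda\bI)=(-\lambda)^{L}(\lambda^{2}-c_{\infty}^{2})$ is accurate.
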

\noindent As $\bA$ is diagonalizable with real eigenvalues, then equation \eqref{syst:AU} is a hyperbolic system of partial differential equations, with solutions of finite-velocity. It is emphasized that the eigenvalue $c_\infty=1/\sqrt{\rho J_u}$ does not depend on the quadrature coefficients $\{(\mu_\ell,\theta_\ell)\}_\ell$, so that the phase velocity upper bounds for the Andrade and Andrade--DA models are equal.

\begin{proposition}
Assuming $\theta_\ell>0$ and $\mu_\ell>0$ for $\ell=1,\dots,L$ then $\spc(\bS)\ni 0$ with multiplicity $1$. Moreover the $L+1$ non-zero eigenvalues $\lambda_\ell$ of $\bS$ are real and, ordering the nodes as $0<\theta_1<\cdots<\theta\indL$, satisfy
\[
\lambda\indLpo<-\theta\indL^2<\cdots<-\theta_\ell^2<\lambda_\ell<-\theta_{\ell-1}^2<\cdots<\lambda_1<0.
\]
\label{PropVpS}
\end{proposition}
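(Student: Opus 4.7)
My plan has three parts: identify the eigenvalue $0$ by inspection, reduce the remaining eigenvalue problem to a scalar secular equation, then analyse that equation by monotonicity and the intermediate value theorem.

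The first column of $\bS$ in \eqref{def:bS} vanishes identically, so $0\in\spc(\bS)$ with $(1,0,\dots,0)\trsp$ as an eigenvector; moreover, for any eigenvector associated with a nonzero eigenvalue $\lambda$, the first row of $\bS\bU=\lambda\bU$ forces its first entry to vanish. I would therefore work with the reduced system on the remaining $L+1$ components, denoted $(s,\phi_1,\dots,\phi\indL)$, and introduce the weighted sum $M=\sum_j\mu_j\phi_j$. Using the $\phi_\ell$-equation, provided $\lambda\neq-\theta_\ell^2$, I would express
\[
\phi_\ell = -\frac{\gla\,\bigl(\eta\inv\,s + A\Gamma(1+\alpha)\,M\bigr)}{\theta_\ell^2 + \lambda},
\]
take the $\mu_\ell$-weighted sum to obtain $M$ in closed form, and combine with the $\sigma$-equation $J_u\lambda\,s + \eta\inv s + A\Gamma(1+\alpha)\,M = 0$. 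Elementary algebra collapses the compatibility condition into the scalar secular equation
\[
g(\lambda) \;:=\; 1 + \frac{1}{J_u\,\eta\,\lambda} + A\Gamma(1+\alpha)\sum_{\ell=1}^L\frac{\mu_\ell\,\gla}{\theta_\ell^2 + \lambda} \;=\; 0.
\]
A brief standalone argument rules out the excluded values $\lambda=-\theta_k^2$: the $k$-th $\phi$-equation then forces $\eta\inv s + A\Gamma(1+\alpha)M=0$, which substituted into the $\sigma$-equation gives $s=0$ and then $M=0$; the other $\phi$-equations yield $\phi_\ell=0$ for $\ell\neq k$, and the identity $M=\mu_k\phi_k$ combined with $\mu_k>0$ forces $\phi_k=0$.

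The heart of the proof is the study of $g$. Under the positivity hypotheses $\mu_\ell>0$ and $\gla>0$ (with $J_u,\eta,A>0$), one has
\[
g'(\lambda) = -\frac{1}{J_u\,\eta\,\lambda^2} - A\Gamma(1+\alpha)\sum_{\ell=1}^L\frac{\mu_\ell\,\gla}{(\theta_\ell^2+\lambda)^2} \;<\; 0
\]
on every interval of continuity of $g$, namely $(-\infty,-\theta\indL^2)$, the intervals $(-\theta_{\ell+1}^2,-\theta_\ell^2)$ for $\ell=1,\dots,L-1$, then $(-\theta_1^2,0)$ and $(0,+\infty)$. Computing one-sided limits yields $\lim_{\lambda\to\pm\infty}g(\lambda)=1$, $g(\lambda)\to\pm\infty$ as $\lambda\to 0^\pm$, and, for each $\ell$, $g(\lambda)\to+\infty$ as $\lambda\to(-\theta_\ell^2)^+$ while $g(\lambda)\to-\infty$ as $\lambda\to(-\theta_\ell^2)^-$. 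Strict monotonicity together with the intermediate value theorem then produces exactly one root of $g$ in each of the $L+1$ intervals lying strictly to the left of the origin, arranged as claimed by the proposition, and no root in $(0,+\infty)$ where $g>1$.

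To conclude, multiplying $g(\lambda)=0$ by $\lambda\prod_\ell(\theta_\ell^2+\lambda)$ converts the equation into a polynomial identity of degree $L+1$, so the $L+1$ roots found above exhaust the nonzero eigenvalues of $\bS$. Combined with the null eigenvalue already identified, this accounts for $L+2$ eigenvalues of the $(L+2)\times(L+2)$ matrix $\bS$, which forces the algebraic multiplicity of $0$ to equal $1$. The main technical obstacle I anticipate is carrying out the algebraic reduction to $g(\lambda)=0$ cleanly while keeping track of the excluded poles $\lambda=-\theta_k^2$; once $g$ is in hand, the interlacing is a routine sign table.
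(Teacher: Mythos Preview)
Your argument is correct, and it takes a genuinely different route from the paper. The paper computes the characteristic polynomial ${\cal P}_{\!\bS}$ explicitly through a sequence of row and column operations on $\det(\bS-\lambda\bI)$, then evaluates the sign of ${\cal P}_{\!\bS}$ at the endpoints $0^-$, $-\theta_k^2$, and $-\infty$ to force a root in each of the $L+1$ intervals by the intermediate value theorem; uniqueness in each interval follows only at the end from the degree count. You instead reduce the eigenvalue problem to a scalar secular equation $g(\lambda)=0$ by eliminating the $\phi_\ell$ and exploit the fact that $g$ is strictly decreasing on every interval of continuity. This buys you uniqueness directly, avoids the bookkeeping of alternating signs of large products, and makes the interlacing structure transparent as a consequence of the pole pattern of $g$; it is essentially the classical Cauchy interlacing argument for structured perturbations. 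The price is the algebraic elimination and the need to exclude the poles $\lambda=-\theta_k^2$ and justify $s\neq 0$ separately, both of which you handle. Either approach works; yours is arguably more conceptual, while the paper's stays closer to a direct determinant calculation.
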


\begin{proof}
Let ${\cal P}_{\!\bS}(\lambda)$ denote the characteristic polynomial of the matrix $\bS$, i.e. ${\cal P}_{\!\bS}(\lambda)=\det(\bS-\lambda\bI\indLpt)$ with $\bI\indLpt$ the $(L+2)$-identity matrix. The line $i$ and the column $j$ of the determinant are denoted by ${\cal L}_i$ and ${\cal C}_j$, respectively. The following algebraic manipulations are performed successively: 
\begin{itemize}
\item[(i)] ${\cal L}_j\leftarrow {\cal L}_j-\gamma_\alpha\,\theta_j^{1-2\alpha}\,{\cal L}_1\qquad\text{with }j=2,\dots, L+1$
\item[(ii)] ${\cal C}_1\leftarrow {\cal C}_1 \prod\limits_{\ell=1}^L(-\theta_\ell^2-\lambda)$
\item[(iii)] ${\cal C}_1\leftarrow {\cal C}_1-\gamma_\alpha\,\theta_\ell^{1-2\alpha}\,\lambda\,{\cal C}_\ell\prod\limits_{\substack{i=1\\i\neq \ell}}^L(-\theta_i^2-\lambda)\qquad\text{for }\ell=2,\dots, L+1$.
\end{itemize}
From \eqref{def:bS} and definition \eqref{def:gUla} of parameters $\gla$ and $\Ula$, one deduces
\[
{\cal P}_{\!\bS}(\lambda)=\lambda\Bigg[\big((J_u \eta)\inv+\lambda\big)\prod_{\ell=1}^L(-\theta_\ell^2-\lambda)+\lambda\sum_{\ell=1}^L\mu_\ell\Ula\prod_{\substack{j=1\\j\neq \ell}}^L(-\theta_j^2-\lambda)\Bigg]:=\lambda\, {\cal Q}_{\bS}(\lambda).
\]
From the above equation, one has ${\cal P}_{\!\bS}(0)\neq0$ while ${\cal Q}_{\bS}(0)\neq0$, therefore $0$ is an eigenvalue of the matrix $\bS$ with multiplicity $1$. In the limit $\lambda\to0$, then asymptotically 
\begin{equation}\label{sgn1}
{\cal P}_{\!\bS}(\lambda)\underset{\lambda\to0}{\sim}(-1)\expL\,(J_u \eta)\inv\,\lambda\,\prod\limits_{\ell=1}^L\theta_\ell^2, \quad\text{so that}\quad \sgn({\cal P}_{\!\bS}(0^-))=(-1)^{L+1}.
\end{equation}
Moreover, using \eqref{def:gUla} and the assumptions considered, then at the quadrature nodes one has for all $k\in\{1,\dots,L\}$
\[
{\cal P}_{\!\bS}(-\theta_k^2)=-\frac{ 2\sin(\pi\alpha)A\,\Gamma(1+\alpha)}{\pi J_u}\,\mu_k\,\theta_k^{5-2\alpha}\prod_{\substack{j=1\\j\neq k}}^L(\theta_k^2-\theta_j^2)\,\Rightarrow\, \sgn({\cal P}_{\!\bS}(-\theta_k^2))=(-1)^{L-k+1}.
\]
Finally, the following limit holds
\begin{equation}\label{sgn3}
{\cal P}_{\!\bS}(\lambda)\underset{\lambda\to-\infty}{\sim}(-1)\expL\lambda\expLpt \quad \Rightarrow\quad \sgn({\cal P}_{\!\bS}(-\infty))=1.
\end{equation}
We introduce the following intervals
\begin{equation}\label{def:int}
\mathcal{I}_{\indLpo}=\big]-\infty,-\theta\indL^2\big], \quad \mathcal{I}_{\ell+1}=\big]-\theta_{\ell+1}^2,-\theta_\ell^2\big]\quad\text{for }\ell=1,\dots,L-1 \quad\text{and}\quad \mathcal{I}_1=\big]-\theta_1^2,0\big].
\end{equation}
Given that $\lambda\mapsto {\cal P}_{\!\bS}(\lambda)$ is continuous, then equations (\ref{sgn1}--\ref{sgn3}) show that the polynomial ${\cal P}_{\!\bS}$ changes sign in each of the intervals $\mathcal{I}_\ell$ of \eqref{def:int}. Consequently, there exist $\lambda_\ell\in \mathcal{I}_\ell$ with $\ell=1,\dots,L+1$ such that ${\cal P}_{\!\bS}(\lambda_\ell)=0$ and which coincide with the eigenvalues, with multiplicity $1$, of the matrix $\bS$ of size $L+2$. \end{proof}

Proposition \ref{PropVpS} states that, under suitable conditions on the quadrature coefficients, the matrix $\bS$ in \eqref{def:bS} has eigenvalues with negative or zero real parts. This property is crucial regarding the numerical modeling developed in the forthcoming Section \ref{SecNumScheme}. As for the energy analysis given in Proposition \ref{PropNRJ}, positivity of quadrature nodes and weights is again the fundamental hypothesis. Lastly, it is possible to use the above proposition to characterize the spectral radius of the matrix $\bS$. 

\begin{proposition}\label{prop:sp:rad:S} 
The spectral radius of the matrix $\bS$ \eqref{def:bS} is such that
\[
\max\!\bigg(\theta\indL^2\,,(J_u \eta)\inv+\sum_{\ell=1}^L\mu_\ell\Ula \bigg) \leq \varrho(\bS) \leq \theta\indL^2 + (J_u \eta)\inv+\sum_{\ell=1}^L \mu_\ell\Ula. 
\]
\end{proposition}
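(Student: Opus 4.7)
My plan is to exploit the eigenvalue interlacing established in Proposition \ref{PropVpS}, which immediately implies that $\varrho(\bS)=|\lambda_{L+1}|$ since $\lambda_{L+1}<-\theta\indL^2<\lambda\indL<\cdots<\lambda_1<0$ identifies the most negative (hence largest in modulus) eigenvalue. The idea is then to combine this interlacing with the trace identity in order to sandwich $|\lambda\indLpo|$.

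First I would compute the trace of $\bS$ directly from the diagonal of \eqref{def:bS}, obtaining
\[
\tr(\bS)=-(J_u\eta)\inv-\sum_{\ell=1}^L\theta_\ell^2-\sum_{\ell=1}^L\mu_\ell\Ula,
\]
and, since the eigenvalues of $\bS$ are $0,\lambda_1,\dots,\lambda\indLpo$, deduce
\[
\lambda\indLpo=\tr(\bS)-\sum_{\ell=1}^L\lambda_\ell,\qquad |\lambda\indLpo|=-\tr(\bS)+\sum_{\ell=1}^L\lambda_\ell.
\]
The entire result then follows by plugging the bounds on each $\lambda_\ell$ from Proposition \ref{PropVpS} into this identity.

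For the upper bound I would use $\lambda_\ell<-\theta_{\ell-1}^2$ for $\ell=2,\dots,L$ together with $\lambda_1<0$, giving $\sum_{\ell=1}^L\lambda_\ell<-\sum_{k=1}^{L-1}\theta_k^2$; after simplification with the expression of $\tr(\bS)$, the sum $\sum_{\ell=1}^L\theta_\ell^2$ collapses to $\theta\indL^2$ and yields the claimed upper bound. For one of the two lower bounds, I would use $\lambda_\ell>-\theta_\ell^2$ for every $\ell=1,\dots,L$, which cancels the full $\sum_{\ell=1}^L\theta_\ell^2$ contribution and gives $|\lambda\indLpo|>(J_u\eta)\inv+\sum_{\ell=1}^L\mu_\ell\Ula$. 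The second lower bound $|\lambda\indLpo|>\theta\indL^2$ is a direct restatement of $\lambda\indLpo<-\theta\indL^2$ from Proposition \ref{PropVpS}, and taking the maximum of these two lower estimates completes the argument.

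No real obstacle is expected, since the proof is essentially an algebraic consequence of the trace formula and the strict interlacing already proved; the only point requiring slight care is ensuring that the telescoping between $\sum_{\ell=1}^L\theta_\ell^2$ in $\tr(\bS)$ and the bounds on $\sum\lambda_\ell$ is set up consistently (in particular handling the endpoint $\lambda_1<0$ as an ``$\ell=1$'' instance with $\theta_0=0$).
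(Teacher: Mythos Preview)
Your proposal is correct and follows essentially the same argument as the paper: compute the trace of $\bS$, combine it with the interlacing bounds $-\sum_{\ell=1}^{L}\theta_\ell^2\leq\sum_{\ell=1}^{L}\lambda_\ell\leq-\sum_{\ell=1}^{L-1}\theta_\ell^2$ from Proposition~\ref{PropVpS}, and read off $\varrho(\bS)=|\lambda\indLpo|$. Your write-up is in fact slightly more explicit than the paper's, in that you spell out that the lower bound $\theta\indL^2$ comes directly from $\lambda\indLpo<-\theta\indL^2$ rather than from the trace identity.
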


\begin{proof}
By definition, one has
\begin{equation}\label{tr:S}
\tr(\bS)=-\bigg[ (J_u \eta)\inv+\sum_{\ell=1}^L(\theta_\ell^2+\mu_\ell\Ula) \bigg]\equiv\sum_{\ell=1}^{L+1}\lambda_\ell.
\end{equation}
According to the proof of Property \ref{PropVpS}, the eigenvalues $\lambda_\ell$ satisfy
\[
-\sum_{\ell=1}^{L}\theta_{\ell}^2 \leq \sum_{\ell=1}^L\lambda_\ell \leq -\sum_{\ell=1}^{L-1}\theta_\ell^2.
\]
Substitution in \eqref{tr:S} and providing that $\varrho(\bS)=|\lambda_{\indLpo}|$ allows to concludes the proof. 
\end{proof}

%-----------------------------------------------------------------------------------------------------------------------------------

\subsection{Semi-analytical solution}\label{SecPDEana}

Let us consider a homogeneous Andrade--DA medium described by equations \eqref{EDP1} and \eqref{EDP2}, together with equation \eqref{ode:DR}. A corresponding semi-analytical solution is sought in order to validate the ensuing numerical simulations of wave propagation. It is assumed $\fs=0$ and excitation $\fu(x,t)=F(t)\delta(x-x_s)$ at source point $x_s$ with time evolution $F$. Applying space-time Fourier transforms and their inverses leads to the stress field solution in the form of
\[
\hat{\sig}(x,\omega)= \frac{i \hat{F}(\omega)}{2\pi c_{\infty}^2 {J_u} }\int_{-\infty}^{+\infty} \! \frac{ \displaystyle k}{\displaystyle k^2 -k_0^2}\, e^{i k (x-x_s)} \td k :=   \frac{i\hat{F}(\omega)}{2\pi c_{\infty}^2 {J_u} }\int_{-\infty}^{+\infty} g(k)\td k,
\]
with
\[
 k_0=\left[ \left(\! \frac{\omega}{c_{\infty}}\! \right)^{\!\!2}\left[ 1 +  \sum_{\ell=0}^{L}\frac{\mu_\ell \Ula }{\theta_\ell^2+i\omega} \right] - \frac{i\rho\,\omega}{\eta} \right]^{1/2}.
\]
The poles $\pm k_0$ of $g$ are simple and satisfy $\imag[k_0]<0$. Using the residue theorem, one obtains in the time-domain the stress field solution
\begin{equation}\label{sol:sig}
\sig(x,t)=-\frac{\sgn(x-x_s)}{2\pi c_\infty^2 J_u} \int_{0}^{+\infty} \real\!\Big[\hat{F}(\omega) e^{i(\omega t -k_0|x-x_s|)}\Big]\td \omega.
\end{equation}
Similarly, the velocity field and the memory variables satisfy
\begin{equation}\label{sol:v:phi}\begin{aligned}
& v(x,t)=\frac{1}{2\pi}\int_{0}^{+\infty}\real\!\bigg[\frac{k_0}{\omega}\hat{F}(\omega)e^{i(\omega t -k_0|x-x_s|)}\bigg]\td \omega,  \\
& \phi_\ell(x,t)=-\frac{\sgn(x-x_s)\gla}{2\pi c_\infty^2} \int_{0}^{+\infty}\real\!\bigg[\frac{i\omega}{\theta_\ell^2+i\omega}\hat{F}(\omega) e^{i(\omega t -k_0|x-x_s|)}\bigg]\td \omega, \quad \ell=1,\dots,L.
\end{aligned}\end{equation}
In the numerical results presented Section \ref{SecExp}, the frequency-domain integrals featured in solutions \eqref{sol:sig} and \eqref{sol:v:phi} are computed using a standard quadrature rule over the frequency-band considered.

%-----------------------------------------------------------------------------------------------------------------------------------
%-----------------------------------------------------------------------------------------------------------------------------------

\section{Numerical methods}\label{SecNum}

\subsection{Quadrature methods}\label{SecNumQuad}

Two different approaches can be employed to determine the $2L$ coefficients $\{(\mu_\ell,\theta_\ell)\}_\ell$ of the diffusive approximation \eqref{Diff:Approx}. While the most usual one is based on orthogonal polynomials, the second approach is associated with an optimization procedure applied to the model compliance. Both lead to positive quadrature coefficients, which ensures the stability of the Andrade--DA model, as shown by propositions \ref{PropNRJ} and \ref{PropVpS}.

\paragraph{Gaussian quadrature.} Various orthogonal polynomials can be used to evaluate the improper integral \eqref{Diff:Rep} introduced by the diffusive representation of fractional derivatives. Historically, the first one has been proposed in \cite{Yuan02}, where a Gauss-Laguerre quadrature is chosen. Its slow convergence was highlighted and then corrected in \cite{Diethelm08} with a Gauss-Jacobi quadrature. This latter method has been lastly modified in \cite{Birk10}, where alternative weight functions are introduced, yielding an improved discretization of the diffusive variable owing to the use of an extended interpolation range. Following this latter modified Gauss-Jacobi approach, while omitting the time and space coordinates for the sake of brevity, the improper integral \eqref{Diff:Rep} is then recast as
\begin{equation}
\int_0^{+\infty}\phi(\theta)\td\theta=\int_{-1}^{+1}\big(1-\tta\big)^{\!\gamma}\!\big(1+\tta\big)^{\!\delta}\tilde{\phi}(\tta)\td \tta \simeq\sum_{\ell=1}^L \tmu_{\ell}\,\tilde{\phi}(\tta_{\ell}),
\label{GJ1}
\end{equation}
with the modified diffusive variable $\tilde{\phi}$ defined as
\[
\tilde{\phi}(\tta)=\frac{ 4}{ \big(1-\tta\big)^{\!\gamma-1}\big(1+\tta\big)^{\!\delta+3}}\,\phi\Bigg(\!\bigg(\frac{ 1-\tta}{ 1+\tta}\bigg)^{\!\!2}\Bigg),
\]
and where the weights and nodes $\{(\tmu_{\ell},\tta_{\ell})\}_\ell$ can be computed by standard routines \cite{NRPAS}. According to the analysis of \cite{Birk10}, Section 4, an optimal choice for the coefficients in \eqref{GJ1} is in the present case: $\gamma=3-4\alpha$ and $\delta=4\alpha-1$. Finally, the quadrature coefficients of \eqref{Diff:Approx} are identified as
\begin{equation}
\mu_{\ell}=\frac{ 4\,\tmu_{\ell}}{ \big(1-\tta_{\ell}\big)^{\!\gamma-1}\big(1+\tta_{\ell}\big)^{\!\delta+3}}, \qquad\theta_{\ell}=\bigg(\frac{ 1-\tta_{\ell}}{ 1+\tta_{\ell}}\bigg)^{\!\!2}.
\label{Birk2}
\end{equation}

%-----------------------------------------------------------------------------------------------------------------------------------

\paragraph{Optimization quadrature.} Alternatively, the quadrature coefficients can be deduced from the model physical observables. Note that as the quality factor \eqref{Q:def} is defined as the ratio $Q(\omega)=-\real[{N}]/\imag[{N}]$, then optimizing an objective function based on the latter would entail an indetermination on the function $N$, i.e. of the model constitutive equation. Therefore, a direct optimization of the available Andrade model compliance $N$ is preferred.

From \eqref{Compliance} and its diffusive approximated counterpart \eqref{ComplianceDA}, the corresponding compliances $N$ and $\tilde{N}$ differ only in the terms
\[
\left\{
\begin{aligned}
& \kappa(\omega):=(i\omega)^{-\alpha} && \qquad \textit{Andrade},\\
& {\tilde \kappa}(\omega):=\frac{2 \sin(\pi\alpha)}{\pi} \sum_{\ell=1}^{L}\mu_\ell\,\frac{\theta_{\ell}^{1-2\alpha}}{\theta_{\ell}^2+i\omega}\, && \qquad \textit{Andrade--DA}.
\end{aligned}
\right.
\]
For a given number $K$ of angular frequencies $\omega_k$, one introduces the objective function
\begin{equation}
\mathcal{J}\Big({\{(\mu_\ell,\theta_\ell)\}}_\ell\,;L,K\Big)=\sum_{k=1}^K\left|\frac{{\tilde \kappa}(\omega_k)}{ \kappa(\omega_k)}-1\right|^2=\sum_{k=1}^K\left|\frac{2 \sin(\pi\alpha)}{\pi} \sum_{\ell=1}^{L}\mu_\ell\,\frac{\theta_{\ell}^{1-2\alpha}\,(i\omega_k)^{\alpha}}{\theta_{\ell}^2+i\omega_k}-1\right|^2
\label{obj:opt}
\end{equation}
to be minimized w.r.t parameters $\{(\mu_\ell,\theta_\ell)\}_\ell$ for $\ell=1,\dots,L$. 

A straightforward linear minimization of \eqref{obj:opt} may lead to some negative parameters \cite{TheseBlanc,Blanc13} so that a nonlinear optimization with the positivity constraints $\mu_{\ell}\geq 0$ and $\theta_{\ell}\geq 0$ is preferred. The additional constraint $\theta_{\ell}\leq \theta_{\text{max}}$ is also introduced to avoid the algorithm to diverge. These $3L$ constraints can be relaxed by setting $\mu_{\ell}={\mu'_{\ell}}^{2}$ and $\theta_{\ell}={\theta'_{\ell}}^{2}$ and solving the following problem with only $L$ constraints
\begin{equation}{\label{OptiNL}}
\underset{{\{(\theta'_{\ell},\mu'_{\ell})\}}_\ell}{\min}\mathcal{J}\Big({\{({\mu'_\ell}^2\!,{\theta'_\ell}^2)\}}_\ell\,;L,K\Big)\qquad
\text{with }{\theta'_{\ell}}^2\leq \theta_{\text{max}}\text{ for }\ell=1,\dots,L.
\end{equation}
As problem \eqref{OptiNL} is nonlinear and non-quadratic w.r.t. abscissae $\theta'_{\ell}$, we implement the algorithm SolvOpt \cite{Kappel00, Rekik11} based on the iterative Shor's method \cite{Shor85}. Initial values $\mu'^{\,0}_{\ell}$ and $\theta'^{\,0}_{\ell}$ used in the algorithm must be chosen with care; for this purpose we propose to use the coefficients obtained by the modified Jacobi method \eqref{Birk2} for $\ell=1,\dots,L$
\begin{equation}\label{init:val}
\mu'^{\,0}_{\ell}=\sqrt{\frac{ 4\,\tmu_{\ell}}{ \big(1-\tta_{\ell}\big)^{\!\gamma-1}\big(1+\tta_{\ell}\big)^{\!\delta+3}}},\qquad \theta'^{\,0}_{\ell}=\frac{ 1-\tta_{\ell}}{ 1+\tta_{\ell}}.
\end{equation}
Finally, the angular frequencies $\omega_k$ for $k = 1,...,K$ in \eqref{obj:opt} are chosen linearly on a logarithmic scale over a given optimization band $[\omega_{\text{min}},\omega_{\text{max}}]$, i.e.
\begin{equation}\label{equi:distrib}
\omega_k = \omega_{\text{min}}\left( \frac{\omega_{\text{max}}}{\omega_{\text{min}}}\right)^{\!\frac{k-1}{K-1}}.
\end{equation}

\begin{remark}
In the proposed optimization method, both set of quadrature coefficients $\mu_\ell$ and $\theta_\ell$ are computed by minimization of the objective function $\mathcal{J}$. In particular, the nodes $\theta_\ell$ are not imposed to be equidistributed according to \eqref{equi:distrib} as it is the case in the commonly used approach \cite{Emmerich}. This point will be returned to in Section \ref{SecExpQuad}.
\end{remark}

%-----------------------------------------------------------------------------------------------------------------------------------
%-----------------------------------------------------------------------------------------------------------------------------------

\subsection{Numerical scheme}\label{SecNumScheme}

A numerical scheme is proposed to compute the solution of system \eqref{syst:AU}. Introducing a uniform grid with mesh size $\Delta x$ and time step $\Delta t$, let $\bU_{\!j}^{n}$ denotes the approximation of the solution $\bU(x_j\!=\!j\Delta x,t_n\!=\!n\Delta t)$ with $j=1,\dots,N_x$ and $n=1,\dots,N_t$. Straightforward discretization of \eqref{syst:AU} typically yields to the numerical stability condition \cite{TheseBlanc}
\[
\Delta t\leq \min\left(\frac{\Delta x}{c_\infty},\,\frac{2}{\varrho(\bS)}\right).
\]
As shown by Proposition \ref{prop:sp:rad:S}, the usual CFL bound on the time step $\Delta t\leq \Delta x/c_\infty$ may be reduced as $\eta$ decreases or $A$ increases, which turns out to be detrimental to the numerical scheme. Moreover, as $\varrho(\bS)$ depends on the quadrature coefficients of the diffusive variable the stability condition would in turn not depend only on meaningful physical quantities such as the maximum phase velocity $c_\infty$.

\paragraph{Splitting.} Alternatively, we follow here the splitting approach analyzed in \cite{LeVeque92}. To implement \eqref{syst:AU} numerically, one solves successively the propagative equation 
\begin{equation}
\frac{\p \bU}{\p t} +\bA \frac{\p \bU}{\p x}= \boldsymbol{0}
\label{SplitPropa}
\end{equation}
and the diffusive equation
\begin{equation}
\frac{\p \bU}{\p t} = \bS\bU+\bF.
\label{SplitDiffu}
\end{equation}
Due to the structure of matrix $\bS$, one defines from \eqref{VecUF} the subvectors
\begin{equation}\label{subvec}
\bbU=\big[\sigma,\,\phi_1,\,\cdots,\,\phi\indL\big]\trsp, \hspace{1cm} \bbF=\big[\fs,\,J_u\gamma_{1,\alpha}\,\fs,\,\cdots,\,J_u\gamma\indLa\,\fs\big]\trsp,
\end{equation}
and from \eqref{def:bS} the submatrix
\[
 \bbS=\left[\begin{array}{cccc}
 - (J_u \eta)\inv & -{A\Gamma(1+\alpha)}{J_u}\inv\mu_1& \cdots & -{A\Gamma(1+\alpha)}{J_u}\inv\mu\indL\\[1mm]
 -\gamma_{1,\alpha}\eta\inv & -\theta_1^2-\Upsilon_{1,\alpha}\,\mu_1 & \cdots & -\Upsilon_{1,\alpha}\, \mu\indL \\[1mm] 
 \vdots & \vdots & \ddots & \vdots  \\[1mm]
 -\gamma\indLa\eta\inv& -\Upsilon\indLa\,\mu_1 & \cdots & -\theta\indL^2-\Upsilon\indLa\,\mu\indL
\end{array}\right].
\]
Having separated the two source terms, then equation \eqref{SplitDiffu} is equivalently recast in the form
\begin{subnumcases}{\label{SplitDiffuRecast}}
\frac{\p v}{\p t}=\fu,\\
[8pt]
\frac{\p \bbU}{\p t} = \bbS\,\bbU+\bbF.
\end{subnumcases}
The discrete operators associated with the discretizations of \eqref{SplitPropa} and \eqref{SplitDiffuRecast} are respectively denoted by $\bHp$ and $\bHd$. The operator $\bHd$ depends explicitly on time when the forcing terms $\fu$ or $\fs$ are non-zero, whereas $\bHp$ remains independent on $t$. The so-called Strang splitting approach of \cite{LeVeque92} is then used between time steps $t_n$ and $t_{n+1}$, for $n=0,\dots,N_t-1$, which requires to solve \eqref{SplitPropa} and \eqref{SplitDiffu} with adequate time increments as, for $j=1,\dots,N_x$
\begin{equation}
\begin{aligned}
& \bU_{\!j}^{(1)} && = && \bHd(t_n,\,{\Delta t}/{2})\,\bU_{\!j}^{n}, \\[2mm]
& \bU_{\!j}^{(2)} && = &&  \bHp(\Delta t,j)\,\bU^{(1)},  \\[2mm]
& \bU_{\!j}^{n+1} && = && \bHd(t_{n+1},{\Delta t}/{2})\,\bU_{\!j}^{(2)},
\end{aligned}
\label{AlgoSplitting}
\end{equation}
with $\quad\bU^{(1)}=\big[\bU_{\!1}^{(1)}\dots\bU_{\!N_x}^{(1)}\big]\trsp$. Since the matrices $\bA$ and $\bS$ do not commute, an error associated with the splitting scheme is introduced \cite{LeVeque92}. However, provided that $\bHp$ and $\bHd$ are at least second-order accurate and stable, then the time-marching scheme \eqref{AlgoSplitting} constitutes a second-order accurate approximation of the original equation \eqref{syst:AU}.

%-----------------------------------------------------------------------------------------------------------------------------------

\subparagraph{Diffusive operator.} The physical parameters do not vary with time, thus the matrix $\bbS$ does not depend on $t$. Owing to Property \ref{PropVpS}, one has $0\notin\spc(\bbS)=\{\lambda_1,\dots,\lambda\indL\}$, and hence $\det \bbS\neq 0$. Freezing the forcing terms at $t_k$, with $k=n$ or $n+1$, yields for a generic vector $\bU_{\!j}=[v_j,\,\bbU_{\!j}]\trsp$ 
\begin{equation}
\bHd(t_k,{\Delta t}/{2})\,{\bU}_{\!j}=\left[v_j+\frac{\Delta t}{2}\fu(x_j,t_k),\,e^{\bbS\frac{\Delta t}{2}}\bbU_{\!j}-\left(\bI-e^{\bbS\frac{\Delta t}{2}}\right)\bbS^{-1}\bbF(x_j,t_k)\right]\trsp.
\label{HdForcing}
\end{equation}
If there is no excitation, i.e. $\fu=\fs=0$, then integration \eqref{HdForcing} is exact. The matrix exponential entering the definition of the operator $\bHd$ is computed using the method $\sharp$2 in \cite{Moler} based on a (6/6) Pad\'e approximation. Property \ref{PropVpS} ensures that the computation of this exponential is stable.
  
%-----------------------------------------------------------------------------------------------------------------------------------

\subparagraph{Propagative operator.} To integrate \eqref{SplitPropa}, we use a fourth-order ADER (Arbitrary DERivative) scheme \cite{Schwartzkopff04}. This explicit two-step and single-grid finite-difference scheme writes
\begin{equation}\label{def:ADER}
\bU_{\!j}^{(2)}=\bU_{\!j}^{(1)}-\sum_{\ell=-2}^{\ell=2}\sum_{m=1}^{4}{\vartheta_{m,\ell} \left( \bA \frac{\Delta t}{\Delta x} \right)^{\!\!m}}\,\bU_{\!j+\ell}^{(1)}:=\bHp(\Delta t,j)\,\bU^{(1)},
\end{equation}
where the coefficients $\vartheta_{m,k}$ are provided in Table \ref{ader:prm}. It satisfies the optimal stability condition $c_\infty\,\Delta t\,/\,\Delta x \leq 1$.

\begin{table}[ht]
\centering
\begin{tabular}{c|cccc}
%\diagbox{\raisebox{1mm}{\hspace{1mm}$\ell$}}{\raisebox{-1mm}{\hspace{-1mm}$m$}} & $1$& $2$& $3$&$4$\\
& $m=1$ & $m=2$ & $m=3$ & $m=4$ \\ 
\hline
$\ell=-2$ & $1/12$ &$ 1/24$ & $-1/12$ & $-1/24 $\\
$\ell=-1 $& $-2/3 $& $-2/3 $& $1/6$ &$ 1/6$ \\
$\ell=0 $&  $0$ & $5/4$ & $0$ &$ 1/4$\\
$\ell=1 $& $2/3$ & $-2/3$ & $-1/6$ & $1/6$ \\
$\ell=2$ & $-1/12$ & $1/24$ & $1/12$ & $-1/24$ \\
\end{tabular}
\caption{Coefficients $\vartheta_{m,\ell}$ in the ADER--4 scheme \eqref{def:ADER}\vspace{-2mm}}\label{ader:prm}
\end{table}

%----------------------------------------------------------------------------------------------------------------------------------- 
%-----------------------------------------------------------------------------------------------------------------------------------

\section{Numerical results}\label{SecExp}

\subsection{Configuration}\label{SecExpConfig}

The homogeneous domain considered is $400$ m-long and it is characterized by the physical parameters provided in Table \ref{TabParam} and which are consistent with experimentally-based values, see \cite{Bellis13a} and the references therein.

\begin{table}[ht]
\begin{center}
\begin{small}
\begin{tabular}{|c|c|c|c|c|}
\hline
$\rho$ (kg/m$^3$) & $c_{\infty}$ (m/s) & $\eta$ (Pa.s) & $A$ ({Pa}$^{-1}.${s}$^{-\alpha}$) & $\alpha$ \\
\hline  
1200              & 2800               & $10^9$        & $2\,10^{-10}$                     & 1/3      \\
\hline
\end{tabular}
\end{small} 
\end{center}
\vspace{-0.5cm}
\caption{Chosen physical parameters in the Andrade model \eqref{CreepAndrade}.}
\label{TabParam}
\end{table}

%-----------------------------------------------------------------------------------------------------------------------------------

\subsection{Validation of the quadrature methods}\label{SecExpQuad}

The angular frequency range of interest $[\omega_{\text{min}},\omega_{\text{max}}]$ is defined by $\omega_{\text{min}}=\omega_c/100$ and $\omega_{\text{max}}=10\,\omega_c$ for a given central angular frequency ${\omega_c}={60\,\pi}$, while we set $\theta_{\text{max}}=\sqrt{100\,\omega_c}$ and $K=2\,L$. Observables of the original Andrade model \eqref{Compliance} are then compared to those of the Andrade--DA model \eqref{ComplianceDA} on Figure \ref{FigApprox} for the two quadrature methods discussed in Section \ref{SecNumQuad}. Large deviations are observed when the Gaussian quadrature is used, in particular on the attenuation function. On the contrary, an excellent agreement between the original Andrade model and its diffusive counterpart is obtained. Only slight differences can be observed at the scale of the figures within the optimization interval.

\begin{figure}[th!]	
\centering
\subfloat[Creep function $\chi(t)$]{\includegraphics[width=0.5\textwidth]{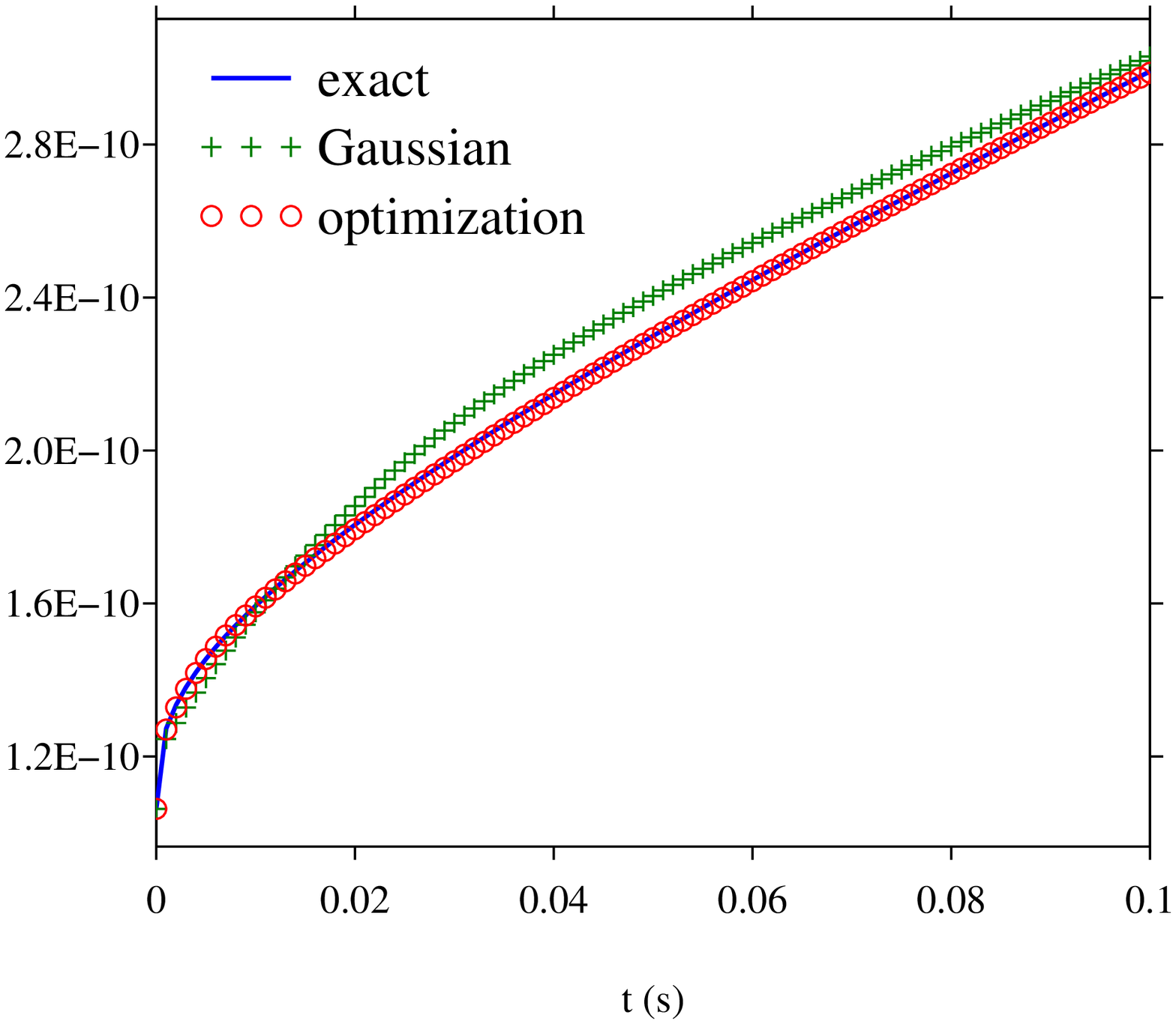}\label{FluOpti4}}
\subfloat[Quality factor $Q(f)$]{\includegraphics[width=0.5\textwidth]{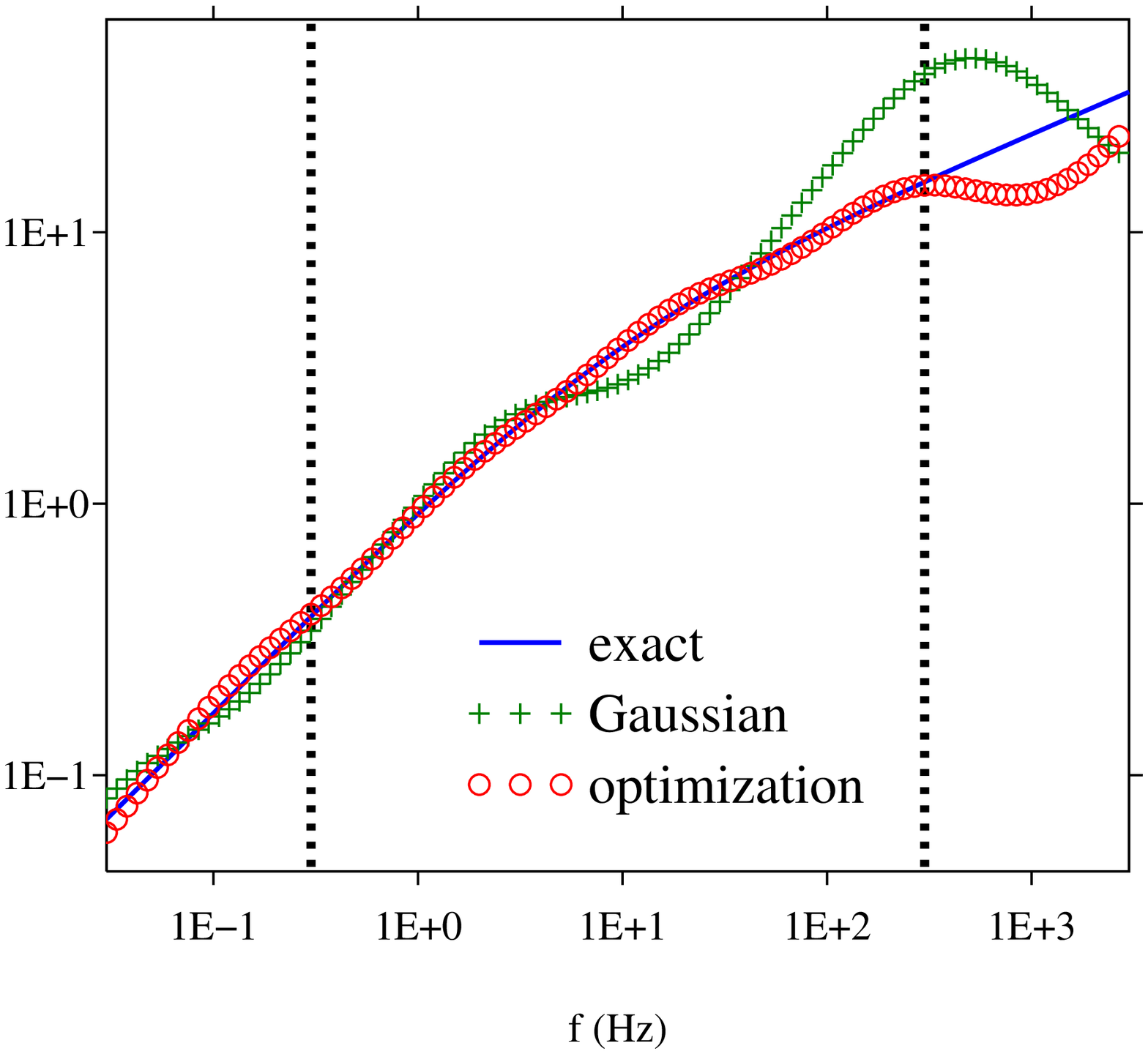}\label{QOpti4}}\\    
\subfloat[Phase velocity $c(f)$]{\includegraphics[width=0.5\textwidth]{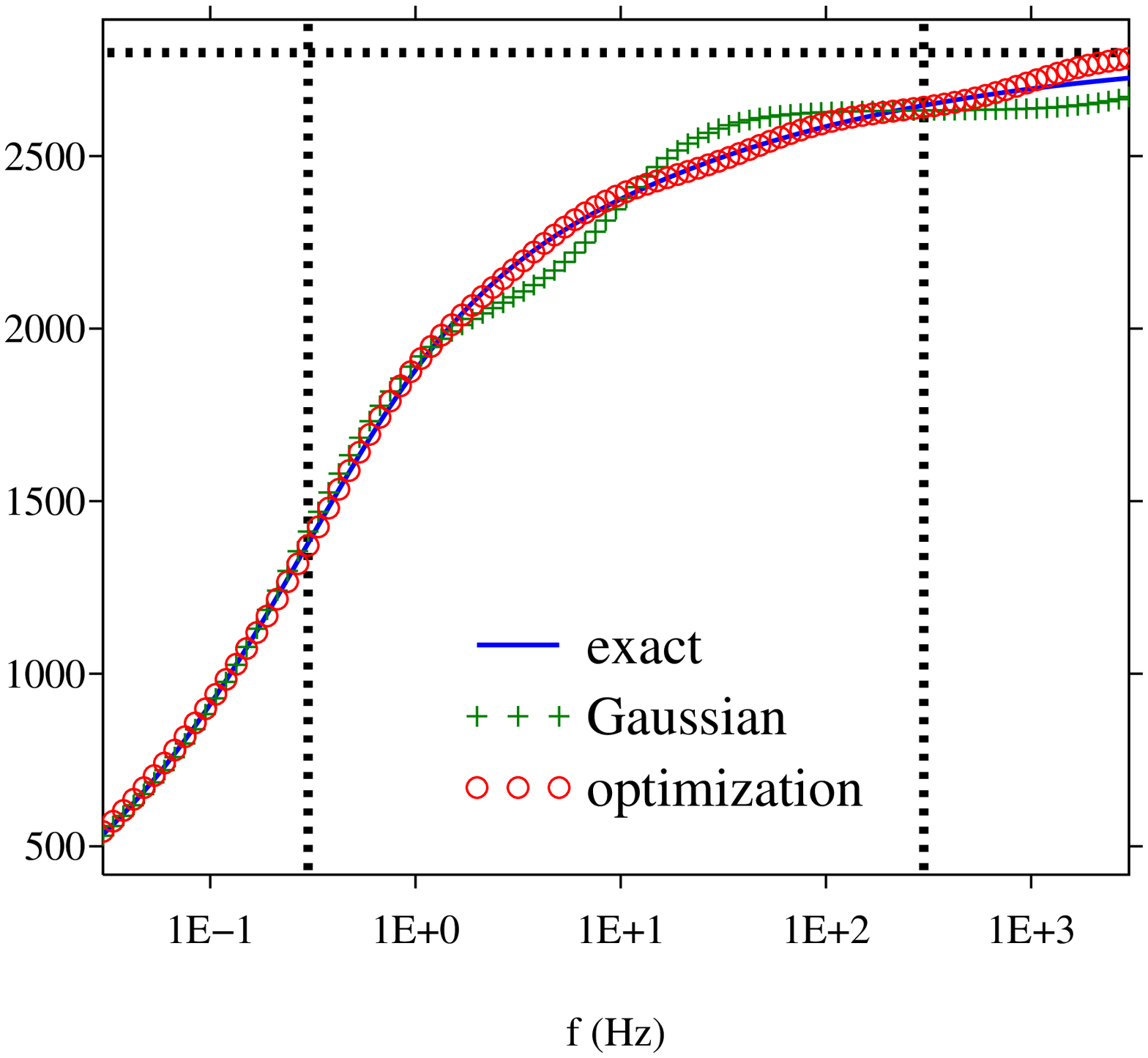}\label{CelOpti4}}
\subfloat[Attenuation $\zeta(f)$]{\includegraphics[width=0.5\textwidth]{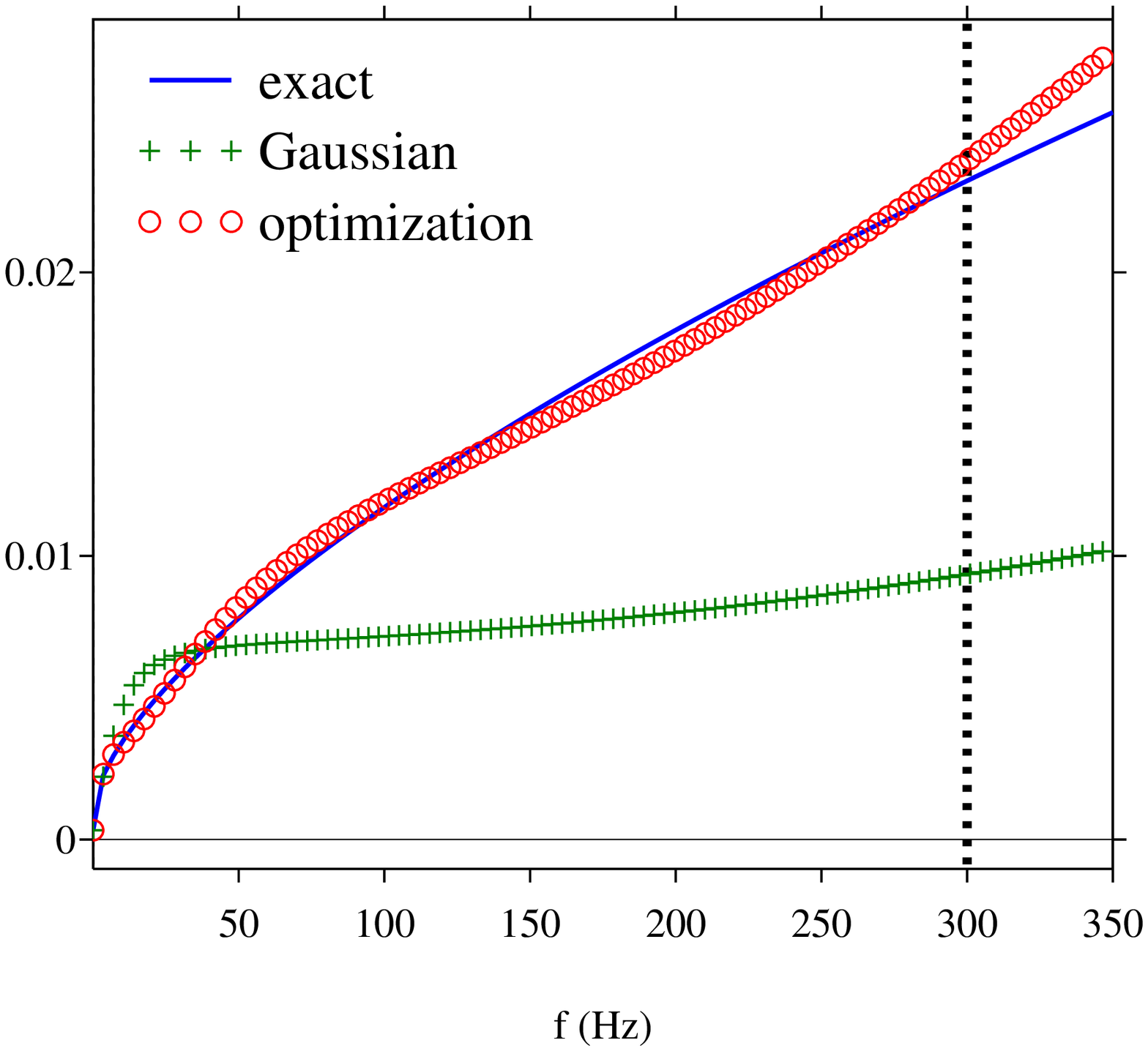}\label{AttOpti4}}
\caption{Comparison between exact observables of the Andrade model and their approximate counterparts, for $L=4$ memory variables. The physical parameters are given in Table \ref{TabParam}. Vertical dotted lines delimit the optimization frequency-band. The horizontal dotted line in panel (c) denotes the high-frequency limit $c_{\infty}$.}
\label{FigApprox}
\end{figure}

\begin{figure}[htp]		
\centering
\subfloat[$L=4$]{\includegraphics[width=0.5\textwidth]{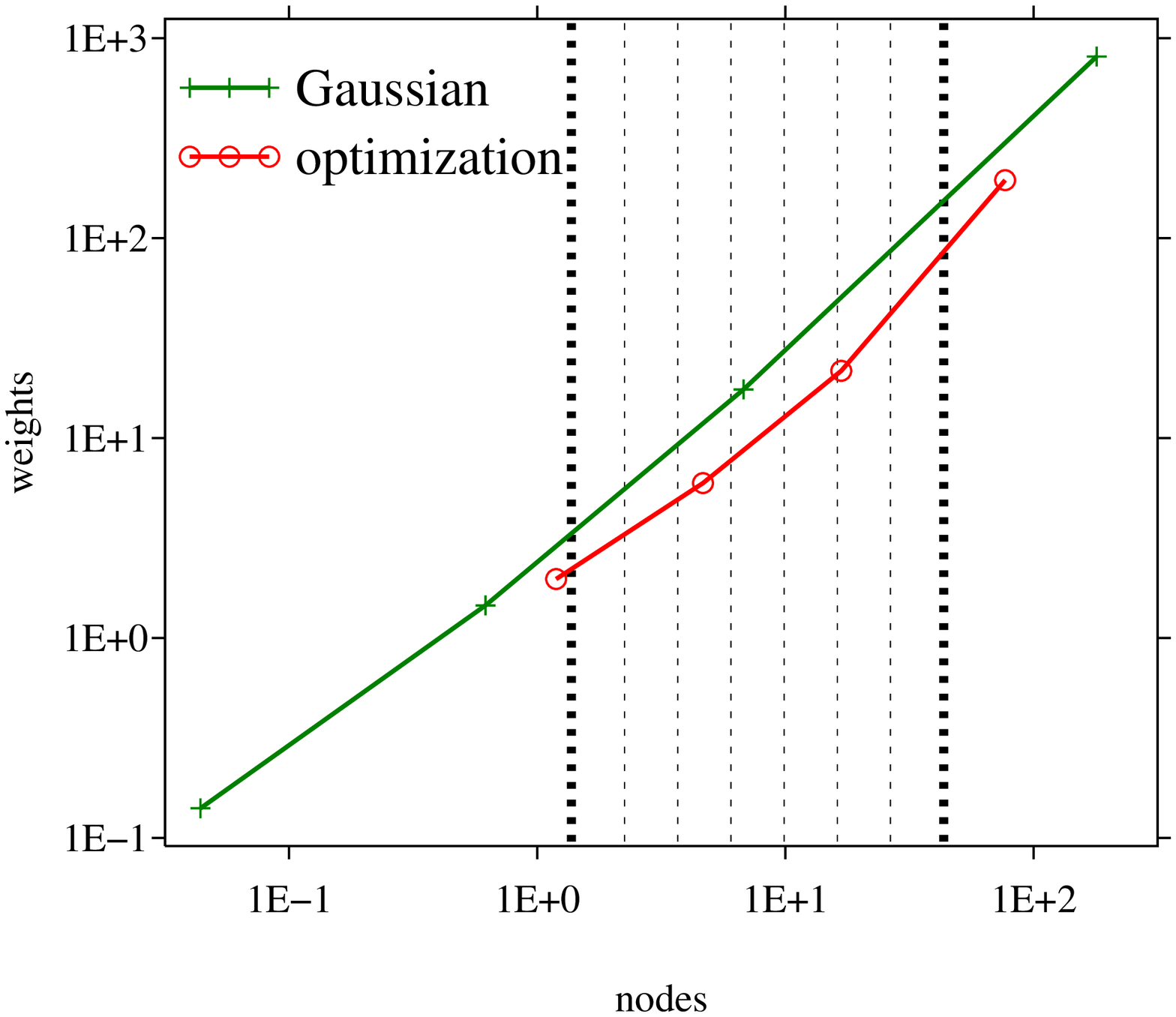}\label{FigNodes4}}
\subfloat[$L=8$]{\includegraphics[width=0.5\textwidth]{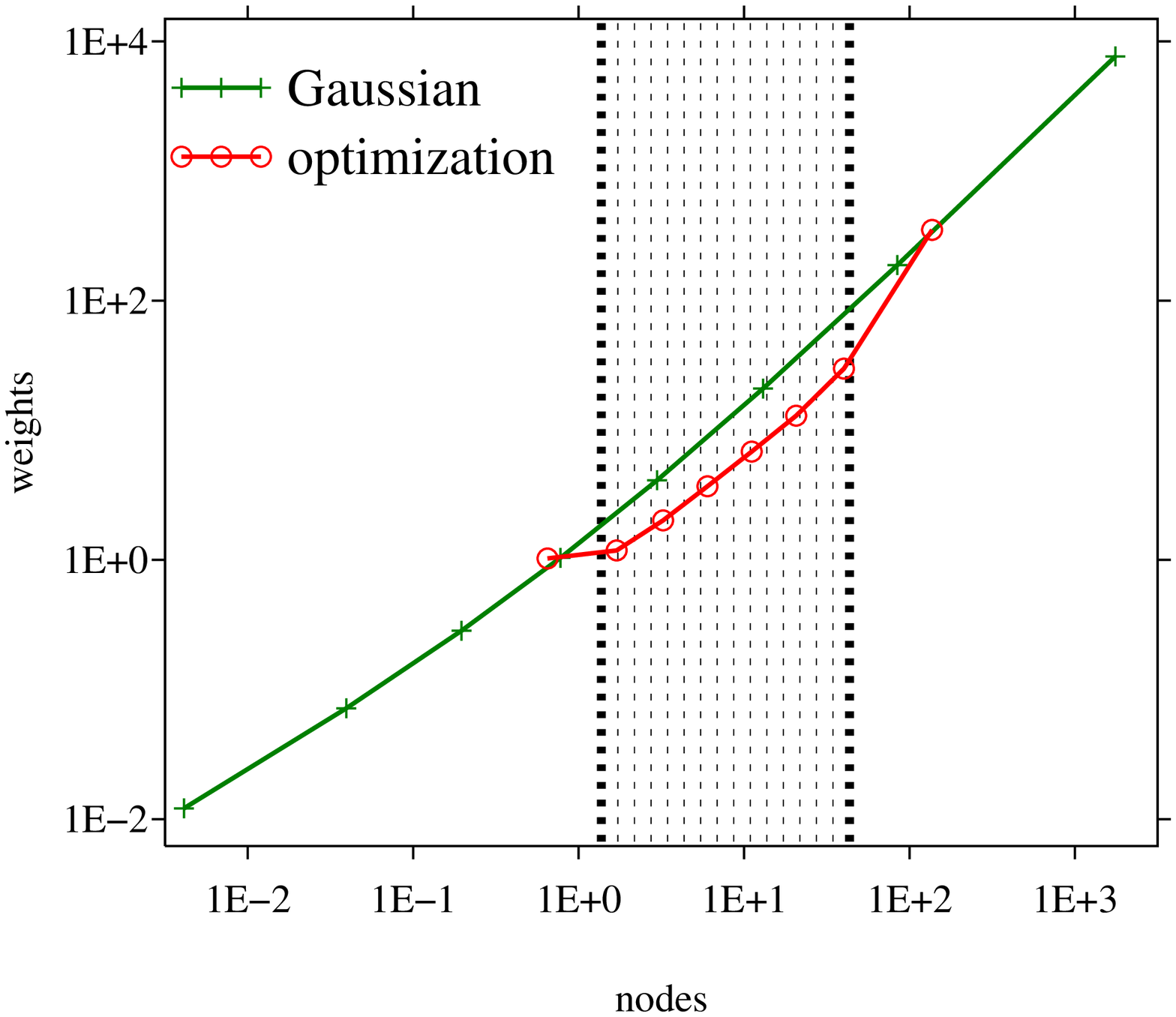}\label{FigNodes8}}
\caption{Quadrature coefficients ${\{(\mu_\ell,\theta_\ell)\}}_\ell$ for the two approaches considered. Vertical dotted lines denote the $K=2L$ scaled optimization angular frequencies $\sqrt{\omega_k}$.}
\label{FigNodes}
\end{figure}

On Figure \ref{FigNodes} are represented the $L=4$ and $L=8$ quadrature coefficients, i.e. nodes $\theta_\ell$ with corresponding weights $\mu_\ell$, for the two methods considered. Note that, according to \eqref{init:val}, the values provided by the Gaussian approach are used as initial guesses in the minimization \eqref{OptiNL}. The scaled optimization angular frequencies $\sqrt{\omega_k}$ for $k=1,\dots,K$ are also shown for the purposes of comparison. Remarkably, the computed optimal nodes do not coincide with equidistributed nodes along the optimization frequency-band, a repartition which is prescribed in the commonly employed approach of \cite{Emmerich}.

\begin{figure}[hb!]		
\centering
\subfloat[$L=4$]{\includegraphics[width=0.5\textwidth]{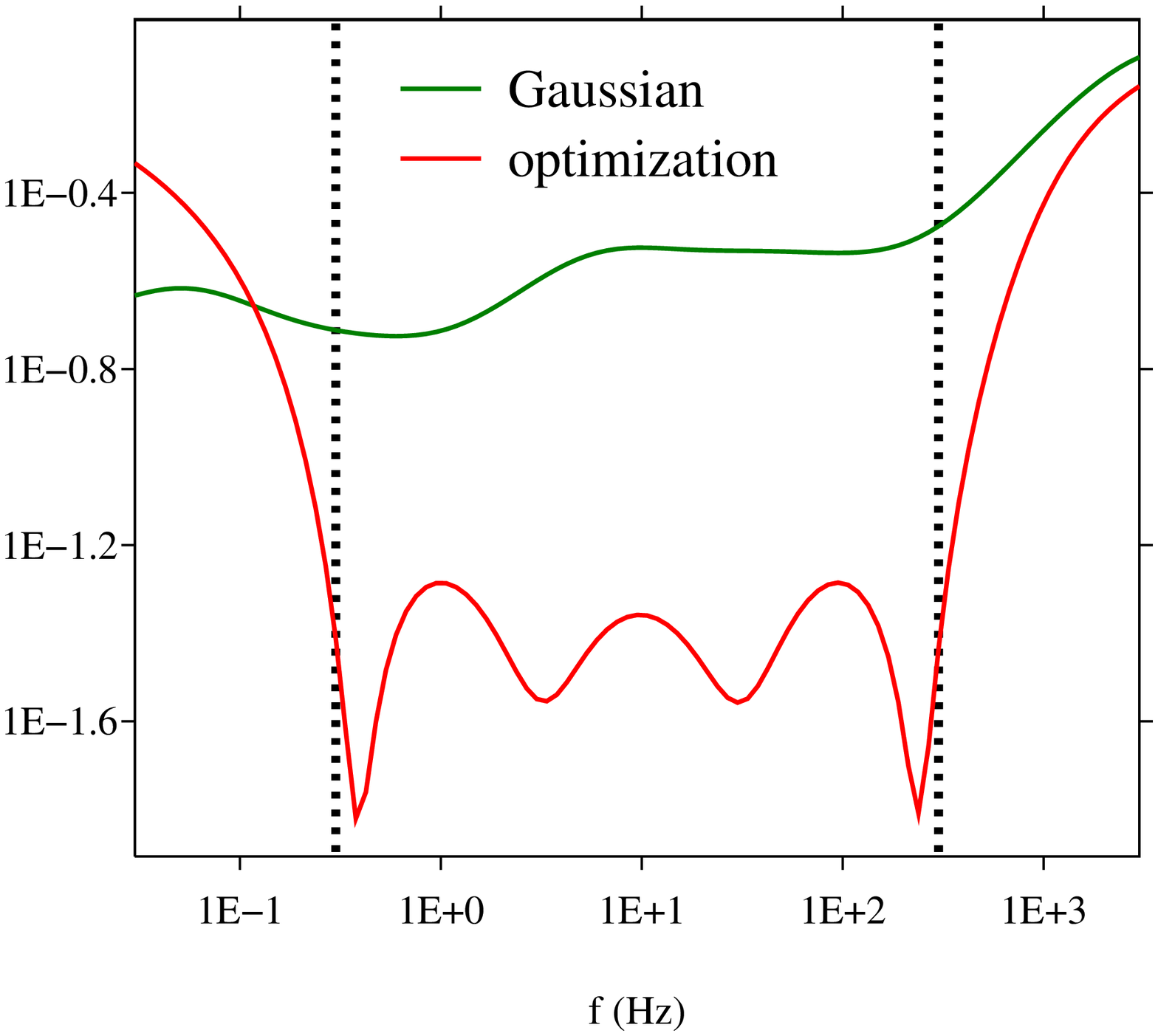}\label{FigErreur4}}
\subfloat[$L=8$]{\includegraphics[width=0.5\textwidth]{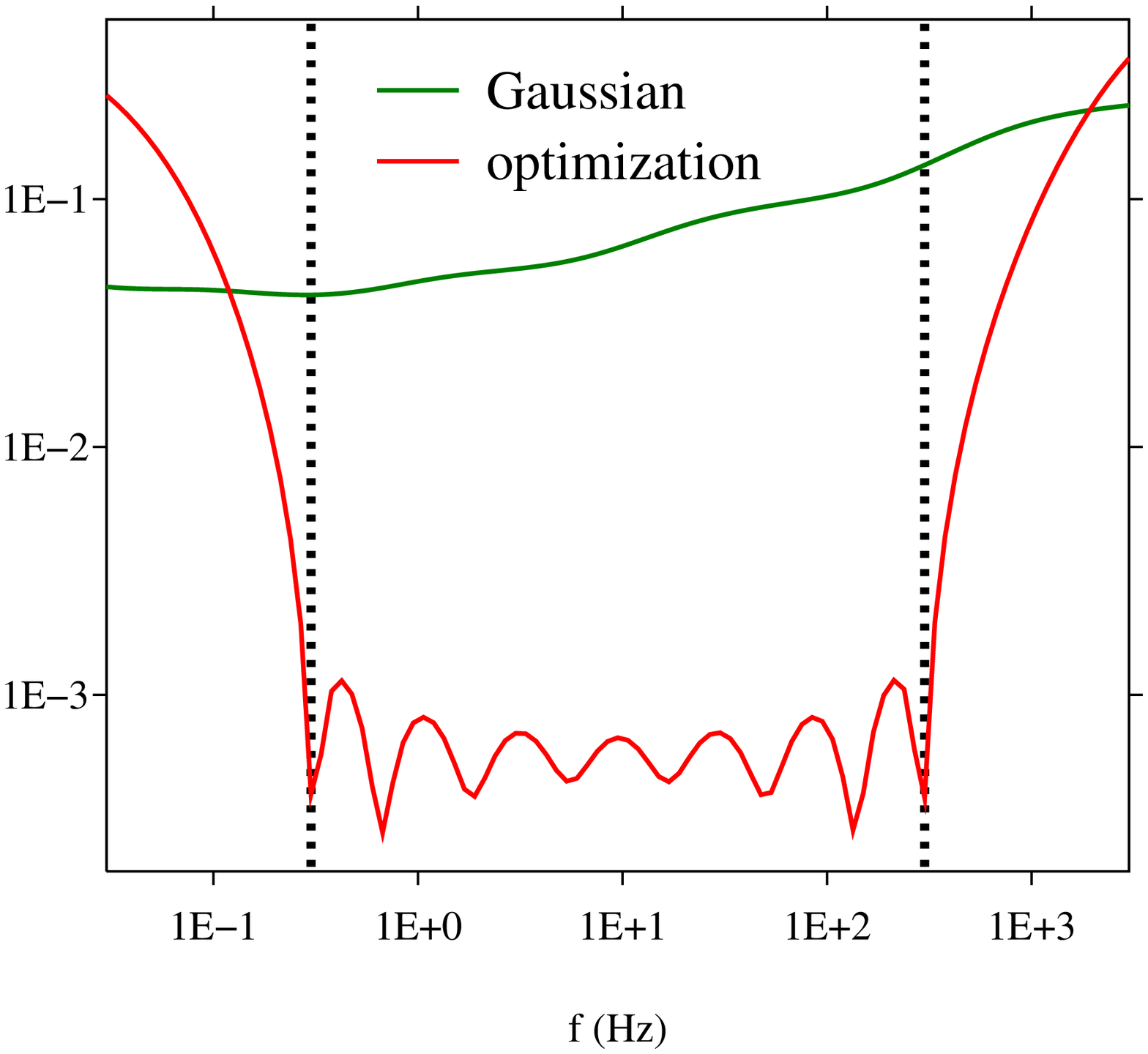}\label{FigErreur8}}
\caption{Computed error $|\frac{{\tilde \kappa}(f)}{\kappa(f)}-1|$. Vertical dotted lines delimit the optimization frequency-band.}
\label{FigErreur}
\end{figure}

The corresponding model error defined as $|\frac{{\tilde \kappa}(\omega)}{\kappa(\omega)}-1|$ and associated with the minimization problem \eqref{OptiNL} is displayed in Figure \ref{FigErreur}, for $L=4$ (Fig. \ref{FigErreur4}) and $L=8$ (Fig. \ref{FigErreur8}) diffusive variables. For a given quadrature method, the results are clearly improved as $L$ increases. For a given $L$, the optimization provides more accurate results compared to the Gaussian quadrature over the frequency band of interest which is delimited by vertical dotted lines.

%-----------------------------------------------------------------------------------------------------------------------------------

\subsection{Validation of the numerical scheme}\label{SecExpWP}

While $\fs=0$ in \eqref{EDP2}, the source in \eqref{EDP1} is imposed at point $x_s$ as $\fu(x,t)=F(t)\,\delta(x-x_s)$ where $F(t)$ is the $C^6$ function
\begin{equation}
F(t) = 
\left\lbrace 
\begin{aligned}
& \sin\,(\omega_ct) - \frac{21}{32}\,\sin\,(2\,\omega_ct) + \frac{63}{768}\,\sin\,(4\,\omega_c t)-\frac{1}{512}\,\sin\,(8\,\omega_c t) \text{ if}\;0\leq t\leq \frac{1}{f_c},\\
& 0 \qquad \text{otherwise}
\end{aligned}
\right. 
\label{JKPS_C6}
\end{equation}
with the central frequency $f_c={\omega_c}/{2\,\pi}=30$ Hz. Moreover, the domain is discretized with $N_x=400$ nodes and the diffusive approximation is computed by constrained optimization with $L=4$ memory variables and thus $K=8$ optimization frequencies. The CFL condition is chosen so that $c_\infty \Delta t/\Delta x=0.95$ and the time integration is performed up to final time $t_f=200\Delta t\approx67$ ms based on the fourth order ADER scheme, see Sec. \ref{SecNumScheme}. Following Section \ref{SecPDEana}, the semi-analytical solution of the Andrade--DA model is computed by discrete inverse Fourier transform on $2048$ modes, with uniform frequency step $\Delta f=0.15$ Hz. The solution is recorded at each time step at receivers located at $x_r=220+40\,(r-1)$ for $r=1,\dots,5$.

\begin{figure}[th!]	
\centering
%\subfloat[Snapshots at $t_f$]{\includegraphics[width=0.48\textwidth]{Simu3Modeles.eps}\label{FigSimus1}}\hspace{0.02\textwidth}
%\subfloat[Time evolution for the Andrade--DA model]{\includegraphics[width=0.48\textwidth]{SismoU.eps} \label{FigSimus2}}
\subfloat[Snapshots at $t_f$]{\includegraphics[width=0.48\textwidth]{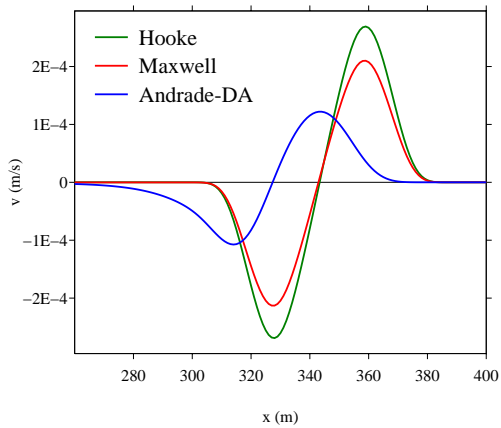}\label{HookMaxAnd}}\hspace{0.02\textwidth}
\subfloat[Time evolution for the Andrade--DA model]{\includegraphics[width=0.48\textwidth]{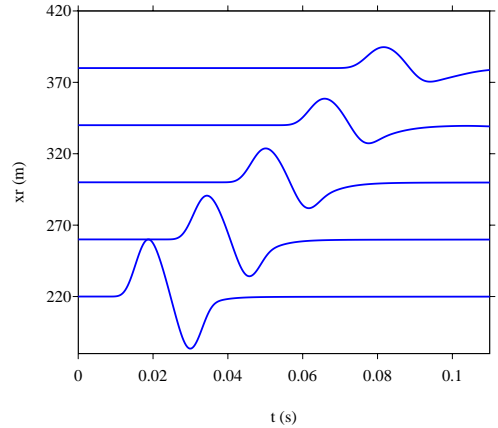} \label{SismoNew2}}
\caption{Time-domain numerical simulations of wave propagation.}
\label{FigSimus} 
\end{figure}

Figure \ref{FigSimus} displays snapshots of forward propagating waves from the source point $x_s=200$. The numerical solutions associated with various values of the attenuation parameters in \eqref{CreepDA} are plotted on Fig. \ref{HookMaxAnd}; namely Hooke model (i.e. purely elastic case which may be obtained in the limit $\eta=+\infty$ and setting $A=0$), Maxwell model ($A=0$, $\eta=10^{9}$), and Andrade--DA model ($A=2\,10^{-9}$, $\eta=10^{9}$). As predicted by the dispersion analysis of sections \ref{SecModelWave} and \ref{SecModelDA}, the phase velocity of the Andrade--DA model, as this of its original version, is lower than in the elastic case, which explains the observed delay. Figure \ref{SismoNew2} shows a seismogram corresponding to the Andrade--DA model in order to highlight attenuation and dispersion of the waveform.

\begin{figure}[bh!]
\centering
%\subfloat[Semi-analytical and numerical solutions]{\includegraphics[width=0.48\textwidth]{SimuUZoom.eps}\label{FigAnalytique1}}\hspace{0.02\textwidth}
\subfloat[Semi-analytical and numerical solutions]{\includegraphics[width=0.48\textwidth]{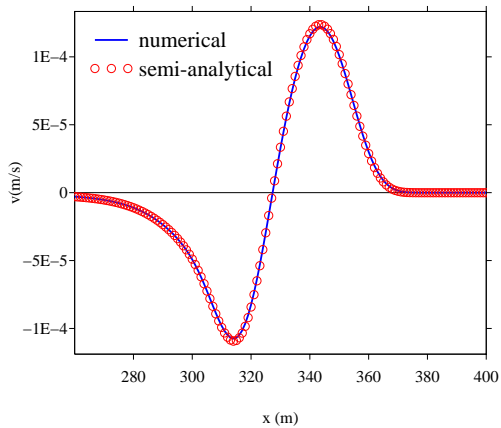}\label{AnalytiqueNew}}\hspace{0.02\textwidth}
\subfloat[Convergence measurements with numbers indicating slopes between two neighboring points]{\includegraphics[width=0.48\textwidth]{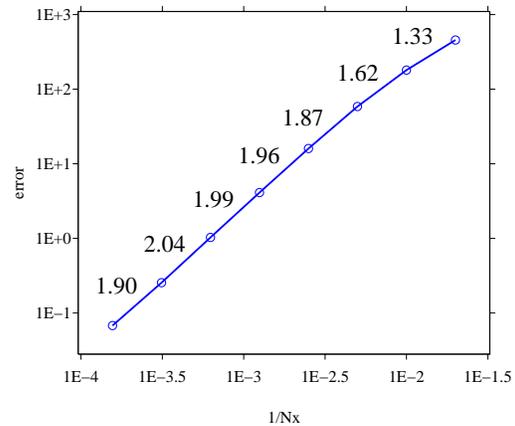}\label{FigAnalytique2} }
\caption{Validation of the numerical scheme for the Andrade--DA model.}
\label{FigAnalytique} 
\end{figure}

Figure \ref{FigAnalytique} compares the semi-analytical and numerical solutions of Andrade--DA model corresponding to equations \eqref{sol:v:phi} and \eqref{EDP}, respectively. Figure \ref{FigAnalytique2} presents convergence measurements done for various discretizations, varying the numbers of nodes in the interval $N_x=50$ to $6400$. Order $2$ is reached, confirming the theoretical results of Section \ref{SecNumScheme}.

%-----------------------------------------------------------------------------------------------------------------------------------
%-----------------------------------------------------------------------------------------------------------------------------------

\section{Conclusion}\label{SecConclu}

Wave propagation phenomena associated with a fractional viscoelastic medium are investigated in this study. The Andrade model is used as a prototypical reference constitutive equation as it satisfactorily describes the transient behaviors of metals and geological media. A diffusive representation of the featured non-local fractional derivative term is introduced to convert the associated convolution product into an integral of a function satisfying a local ordinary differential equation. Based on a quadrature approximation of this integrated term, a system of local partial differential equations is finally obtained and is shown to be well-suited for a numerical implementation.

The system at hand is investigated and it is demonstrated that its well-posedness requires the positiveness of the weights associated with the quadrature scheme. To compute the quadrature coefficients, two numerical methods are combined: a polynomial Gaussian approach to get an initial guess jointly with a constrained optimization to approximate the Andrade model compliance over a frequency-band of interest. It is shown that the properties of the original Andrade model are well approximated by those of the computed Andrade--DA model. Finally, an explicit time-domain finite-difference scheme is described and implemented. Corresponding wave propagation numerical experiments are presented and the efficiency of the proposed approach is highlighted. The main point of this article is that using a diffusive approximation of a fractional derivative term, entering a given viscoelastic constitutive equation, yields a sound mathematical model, that is also easily tractable numerically to perform wave propagation simulations. 

To focus on this message, a simple but realistic fractionally-damped viscoelastic model within a unidimensional and homogeneous configuration has been considered. Its dynamical behavior is described by a first-order hyperbolic system which extension to higher spatial dimensions or heterogeneous media is straightforward. Moreover, efficient numerical methods are currently available and can be directly employed to perform corresponding time-domain simulations. Alternatively, arbitrary-shaped material discontinuities within piecewise-homogeneous 2D Andrade media can be handled using an immersed interface method \cite{Chiavassa11}. Work is currently done on the subject. 

Another line of research concerns extension of the proposed approach to other fractional viscoelastic model, such as the fractional Kelvin-Voigt model \cite{Caputo67,Caputo11} or the fractional Zener model \cite{Pritz96,Nasholm:2013}. More sophisticated models could also be investigated, such as nonlinear fractional viscoelasticity \cite{Hanyga07} or nonlocal models in space \cite{Hanyga12a}. 

\paragraph{Acknowledgements.} The work of Abderrahmin Ben Jazia has been funded by the Ecole Centrale de Marseille, France, for which special thanks are addressed to Guillaume Chiavassa. The authors are thankful to Emilie Blanc for fruitful discussions

%-----------------------------------------------------------------------------------------------------------------------------------

%\clearpage
\bibliographystyle{elsarticle-num}
\bibliography{biblio}

\end{document}